\theoremstyle{plain}
\newtheorem{lemma}{Lemma}
\newtheorem{theorem}{Theorem}
\newtheorem{corollary}{Corollary}
\newtheorem{proposition}{Proposition}
\theoremstyle{definition}
\def\config{\mathbf{x}}
\def\objective{{\Pi}} 
\def\constraint{{\cal D}} 
\def\constraintmax{D_{\mbox{{\tiny max}}}}
\def\multiplier{\mu}
\def\augmented{{\cal F}}
\def\feasable{\tilde{\Omega}}
\def\objective{{\Pi}} 
\def\constraint{D_c} 
\def\constraintmax{D_{\mbox{{\tiny max}}}}
\def\potential{\Phi}
\def\multiplier{\mu}
\def\augmented{{\cal F}}
\def\feasable{\tilde{\Omega}}
\def\constraintMinU{U^*} 
\def\slack{t}
\def\Real{\mathbb{R}}
\def\st{\mbox{ s.t. }} 
\def\ie{\mbox{\em i.e., }} 
\def\wrt{\mbox{\em wrt. }} 
\def\Real{\mathbb{R}}
\def\st{\mbox{ s.t. }} 
\def\ie{\mbox{\em i.e., }} 
\def\ul{\underline}
\def\ra{\rightarrow}
\def\potential{\Phi}
\def\charact{\bbbone}
\def\dsp{\displaystyle}
\def\cond{\ |\ }
\begin{document}

\title{Uplink Energy-Delay Trade-off under Optimized Relay Placement in Cellular Networks}

\author{Mattia Minelli,
        Maode Ma,
        Marceau Coupechoux,
        Jean-Marc Kelif,
        Marc Sigelle,
        and Philippe Godlewski 
\IEEEcompsocitemizethanks{
	\IEEEcompsocthanksitem M. Minelli is with the School of Electrical and Electronic Engineering, Nanyang Technological University, 50 Nanyang Avenue Singapore 639798 and with T\'el\'ecom ParisTech and CNRS LTCI, 46, rue Barrault, Paris, France.\protect\\
E-mail: mattia.minelli@gmail.com
	\IEEEcompsocthanksitem M. Maode is with the School of Electrical and Electronic Engineering, Nanyang Technological University \protect \\
	E-mail: EMDMa@ntu.edu.sg
	\IEEEcompsocthanksitem M. Coupechoux, M. Sigelle and P. Godlewski are with T\'el\'ecom ParisTech and CNRS LTCI. \protect \\
	E-mails: \{first name.name\}@telecom-paristech.fr
	\IEEEcompsocthanksitem J.-M. Kelif is with Orange Labs, Issy-Les-Moulineaux, France. \protect \\
	Email: jeanmarc.kelif@orange.com
	}
}

%





\maketitle

\begin{abstract}
Relay nodes-enhanced architectures are deemed a viable solution to enhance coverage and capacity of nowadays cellular networks. Besides a number of desirable features, these architectures reduce the average distance between users and network nodes, thus allowing for battery savings for users transmitting on the uplink. In this paper, we investigate the extent of these savings, by optimizing relay nodes deployment in terms of uplink energy consumption per transmitted bit, while taking into account a minimum uplink average user delay that has to be guaranteed. A novel performance evaluation framework for uplink relay networks is first proposed to study this energy-delay trade-off. A simulated annealing is then run to find an optimized relay placement solution under a delay constraint; exterior penalty functions are used in order to deal with a difficult energy landscape, in particular when the constraint is tight. Finally, results show that relay nodes deployment consistently improve users uplink energy efficiency, under a wide range of traffic conditions and that relays are particularly efficient in non-uniform traffic scenarios.
\end{abstract}
\vspace{1cm}
\justify

\section{Introduction}

Heterogeneous network deployments in modern cellular networks are regarded as a promising solution to meet the ever-increasing demand of wireless data and voice traffic. They consist in installing a number of low-power nodes, possibly of different types (e.g., femtocells, Relay Nodes (RNs), etc.), inside the coverage area of macro base stations (also called eNodes-B (eNBs) in this paper), resulting in a more {\it dense} network architecture.
Indeed, a higher density of nodes entails a number of benefits, e.g., coverage and capacity are boosted \cite{relays_pabst} 
and power efficiency improves \cite{hou}, due to the reduced distance between User Equipments (UE) and serving nodes \cite{hasan}. This generates operational expenditures savings for operators, and lowers the environmental impact of their infrastructure.    
Results in \cite{hou, saleh12} et. al. show that the introduction of RNs and picocells can reduce downlink power consumption. However, this applies to the uplink as well (see e.g. \cite{Zhao,res_opt}), and the decrease in battery energy consumption for users can be consistent: \cite{uplink_consumption,uplink_consumption_CA} indicate that uplink transmit power has a strong impact on users overall energy demand, especially for high transmit powers.   

Hence, heterogeneous networks can be considered as an effective means to extend mobile users batteries duration, and several research works are dedicated to this topic. Authors of \cite{Zhao} and \cite{Chen} propose a game theory-based framework for the maximization of femtocells uplink energy efficiency, by means of users transmit powers tuning \cite{Zhao} or radio resources assignment and power control parameters optimization \cite{Chen}. In \cite{Zhang}, closed-form expressions of the Signal to Interference Ratio (SIR) and of the outage probability are derived and used to measure the impact of femtocells coverage and users density on energy efficiency. Both \cite{Zhao} and \cite{Chen} apply Quality of Service (QoS)-related constraints to the optimization problem, so as to avoid solutions with high energy efficiency but poor performance. Other works in the area of cellular networks, e.g. \cite{son}, address this issue by jointly optimizing power and user experienced QoS. 

The close interaction between energy consumption and performance \cite{son} makes the problem of energy efficiency maximization in relay-enhanced networks different from that addressed in \cite{Zhao,Chen} et al. for the femtocells deployment case; this is because RNs communicate to their {\it donor} node through a wireless backhaul link \cite{3gpprelayperf}, which represents a well-known performance bottleneck. On the contrary, small cells and femto cells benefit from a wired backhaul, where the delay issue is less crucial.      
Surprisingly, uplink RN networks energy consumption has received limited attention in the literature, to the best of our knowledge. 
Reference \cite{bulakci13} proposes an optimization of the uplink power control in relay-based networks but ignores energy consumption. Authors of \cite{res_opt, Liang11} treat the maximization of uplink energy efficiency via either optimal assignment of subcarriers, users powers and bit allocations \cite{res_opt}, or optimal radio resources allocation \cite{Liang11}. More than one order of magnitude can be achieved in user power consumption \cite{Liang11}.

A drawback of \cite{res_opt} is that the decisive impact of co-channel interference is neglected. Also, the paper assumes that users are fixed in number and position, and always have data to transmit. Hence, the influence of the traffic intensity is not considered, while a number of studies (see e.g. \cite{nagaraj,son}) show the importance of the traffic load for energy efficiency evaluation. 
Finally, \cite{res_opt,Liang11} lack of a thorough theoretical framework for the analysis of relay-enhanced cellular networks uplink energy efficiency, 
and they are based on the sole minimization of users energy consumption, without investigating the necessary tradeoff between this consumption and uplink performance. Hence, the need arises for a finer model to study this tradeoff. 



The contributions of this paper are the following.
\begin{itemize}
\item We study the tradeoff between energy consumption and users experienced delay in uplink relays cellular networks. This is the first study of this type, to the best of our knowledge. We show that in many cases, relays can help both saving energy and reducing delays despite the constraint imposed by the wireless backhaul. We also highlight the interest of using relays in scenarios, where traffic is not uniform. 

In order to find upper bounds on the achievable gains in terms of energy saving and/or average delay, we formulate a constrained optimization problem. Its objective is to place the relays and tune the network parameters so as to minimize the energy consumption per transmitted bit under delay constraints. To solve this problem, we rely on a Simulated Annealing (SA) algorithm enhanced with exterior search and penalty functions. We discuss the effectiveness of our approach for addressing our problem.

\item We propose a framework for the analysis of energy efficiency in uplink relay networks. In order to study the delay constraint, we consider the dynamic nature of cellular traffic, by means of a model of UE arrivals and departures. This results in a hierarchical flow level analysis. The loads of eNBs and RNs are accounted in the estimation of interference and transmission delays. Our propagation model includes shadowing and fast fading, and UEs power control is considered. Overall, the proposed model is more comprehensive, compared to the existing literature on uplink energy efficiency in heterogeneous networks.    

\item We adopt a non-trivial scheduling scheme, which represents an approximation of the Proportional Fair (PF) scheduler, and derive the probability density function of the Signal to Interference plus Noise Ratio (SINR) of a scheduled UE. The choice of the scheduling algorithm has indeed a decisive impact on the delay performance. This is a novel contribution, to the best of our knowledge.  
\end{itemize}

The paper is organized as follows. Section~\ref{system_model} introduces our system model, while Section~\ref{perf_eval} describes the framework used for evaluating energy consumption per transmitted bit and delay. Section~\ref{optimization} is devoted to the optimization problem and our proposed algorithm, while Section~\ref{results} gives our numerical results. Finally, Section~\ref{conclusions} concludes our work. 

\section{System Model} \label{system_model}

We work on the uplink of a cellular network, where transmissions are performed on synchronized frames, and each network {\it station} (i.e. eNB or RN) is associated an uplink frame. Frames are partitioned into Radio Blocks (RBs) of the same bandwidth and time duration, which are labeled with an index $\tau$. Each RB $\tau$ in the frame of station $k$ can be either granted to one of the UEs served by $k$, or be left unused (if there is no UE to be served).  

The network is composed of a set $\mathcal{K}$ of $K$ stations, and divided into cells; let $\mathcal{K}_B$ be the set of eNBs, and $\mathcal{K}_R$ the set of RNs, so that $\mathcal{K}=\mathcal{K}_B \cup \mathcal{K}_R$.
We focus on one given cell $c$ containing one {\it donor} eNB \cite{36.814} and $n_{RN}$ RNs, which are connected to the eNB by means of a wireless {\it backhaul link} (Fig.~\ref{fig:uplink_model}). The set of stations of cell $c$ is denoted $\mathcal{K}_c$. The surface $\mathcal{A}_c$ of cell $c$ is defined as the region where UEs are served by one of the stations of $c$, and its area is denoted $A_c$. Similarly, surface $\mathcal{A}_k$, of area $A_k$, is the region where UEs are served by station $k$. 

UEs select their serving station based on the highest product of received downlink pilot power times a station specific {\it bias}. They are then served over the {\it access link}. The technique of biasing the user association is referred to in literature as {\it cell range expansion} \cite{Damnjanovic}, and it is deemed beneficial for network performance as it allows for load balancing. We assume that eNBs have a bias of 1 and RNs have a common bias $\mathcal{B}$.
\begin{figure}
\centering
\begin{minipage}{.47\textwidth}
\centering
\includegraphics[width=\linewidth]{./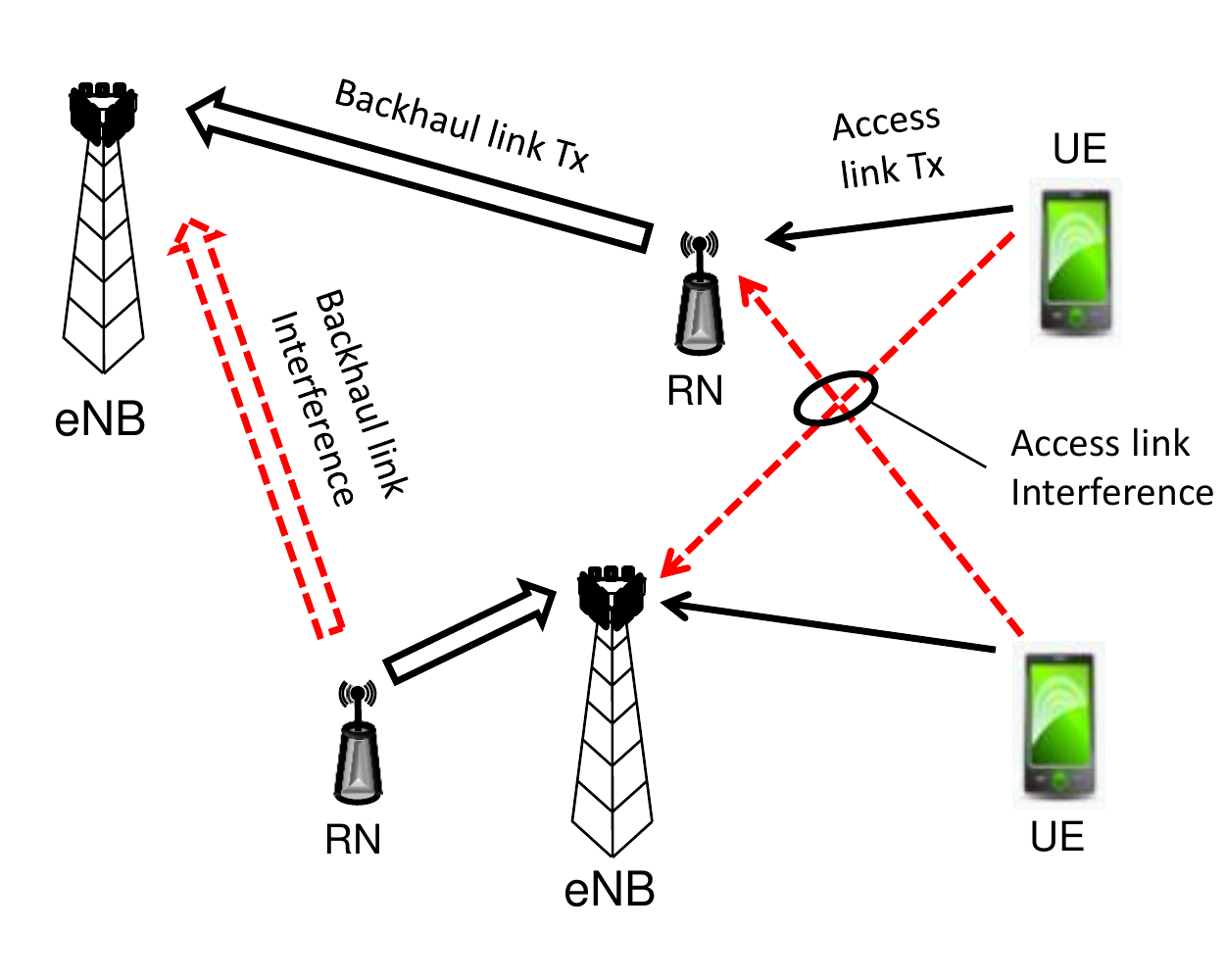} 
\caption{Uplink model.}
\label{fig:uplink_model} 
\end{minipage}\hfill
\begin{minipage}{.47\textwidth}
\centering
\includegraphics[width=\linewidth]{./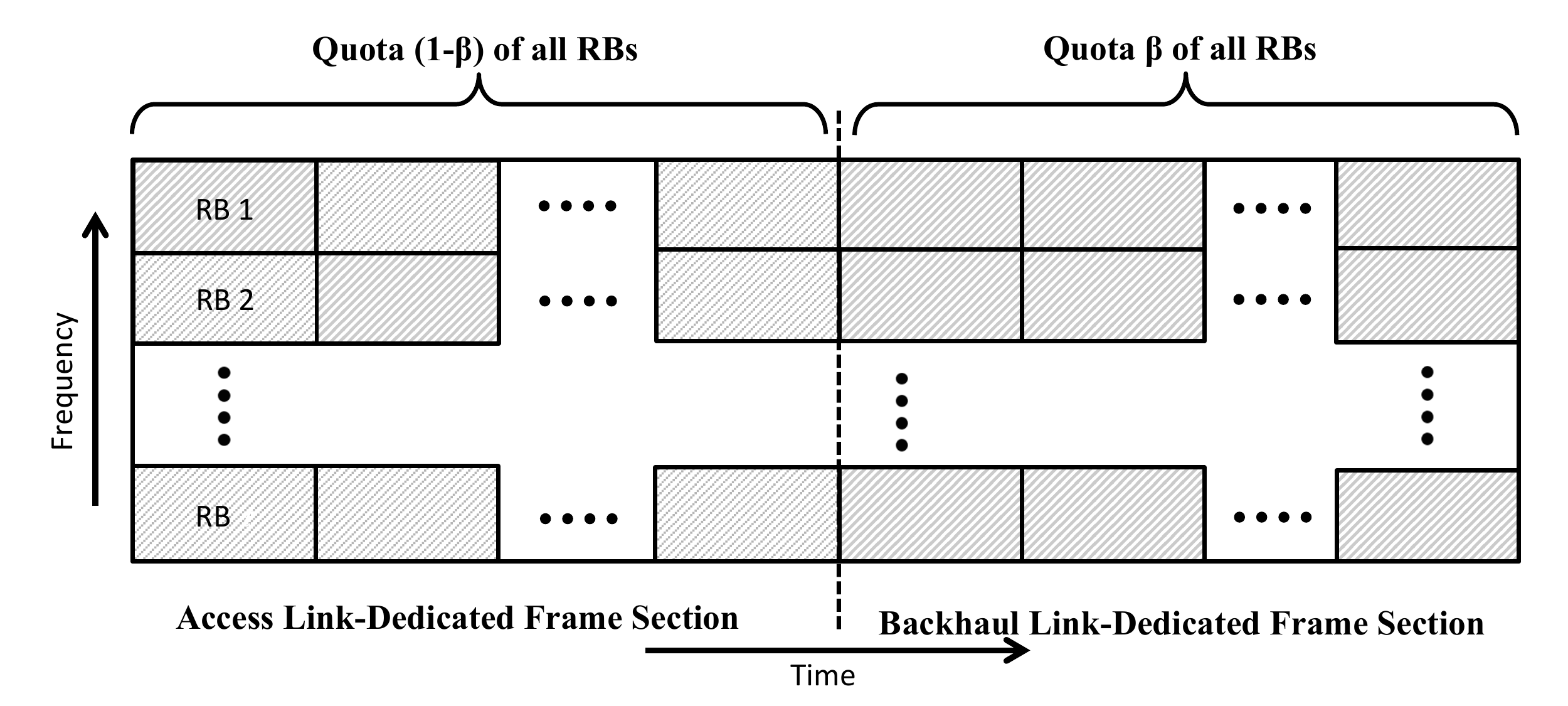}
 \caption{Frame structure.}
\label{fig:frame_struct} 
\end{minipage}
\end{figure}

\subsection{Frame Structure}
Access link and backhaul link transmissions are dedicated two orthogonal sections of the frame (Fig.~\ref{fig:frame_struct}). RN transmissions on the backhaul link, which uses a quota $\beta$ of the RBs, do not interfere with UEs transmissions on the access link.
This choice is widely adopted in the literature (see e.g. \cite{bulakci13, Liang11}), as it avoids interference between RNs and UEs on the uplink. 
For the same reason, we assume that $\beta$ is the same for all network stations. The value of $\beta$ is supposed to be set by the operator and based on considerations on the overall network performance, while our analysis is focused on the performance related to one given network cell, {\it given} $\beta$. 

\subsection{Traffic Model}
We assume that UEs arrive in the network according to a spatial Poisson point process of intensity $\lambda(s)\hspace{5 pt}[arrivals/s/m^2]$, transmit a flow of average size $\xi \hspace{5 pt} [bits]$ to their serving station, and leave the network. 
Flows transmitted from UEs to a RN on the access link are then forwarded by the RN to its donor eNB on the backhaul link. 

The traffic density $\omega(s)$ at a given location $s$ is denoted by $\omega(s)=\lambda(s)\xi \hspace{5 pt}[bits/s/m^2]$, while the average traffic density $\bar{\omega}$ in the network can be computed as  $\bar{\omega}=(1/A_{nw})\int_{\mathcal{A}_{nw}} \lambda(s)\xi {ds}\hspace{5 pt}[bits/s/m^2]$, where $\mathcal{A}_{nw}$ is the overall network surface, of area $A_{nw}$\footnote{We assume $A_{nw}$ large enough so that interference in cell $c$ is accuratly computed.}. We define $\phi(s)\triangleq\omega(s)/\bar{\omega},\forall s \in \mathcal{A}_{nw}$, to account for the local variation of the traffic density with respect to $\bar{\omega}$. 
The ratio $\phi(s)/A_{nw}$ can be seen as the spatial Probability Density Function (PDF) of UEs arrivals. We also define $\Phi_k\triangleq \int_{\mathcal{A}_k}\phi(s){ds}$ for the sake of further developments.

The typical number of bits carried by a RB is assumed to be {\it much smaller} than $\xi$. Hence, access link buffer queue of all stations $k\in \mathcal{K}$ can be modeled as an M/G/1/PS queue (Fig.~\ref{fig:queues}), and its load is denoted with $\rho_k$. It is the sum of load contributions $\varrho_k(s)$ over $\mathcal{A}_k$, so that 
$\rho_k= \int_{\mathcal{A}_k} \varrho_k(s){ds}$. Only {\it stable} scenarios, i.e., $\rho_k<1, \forall k \in \mathcal{K}$ are considered in this work. We define the vectors $\boldsymbol{\rho}=[\rho_1\; \cdots \; \rho_{K}]$ and $\boldsymbol{\rho}_{-k}=[\rho_1 \;\cdots \; \rho_{k-1}\; \rho_{k+1} \;\cdots \;\rho_{K}]$, for the sake of further developments. 
Let introduce a binary random variable $\Theta_k(\tau)$, which is equal to one when RB $\tau$ in station $k$ frame is granted to a UE, and equal to zero otherwise. 
We will assume that the probability for any station $k$ to receive data on RB $\tau$ depends only on $\rho_k$, and is equal to $\mathbb{E}_{\tau}[\Theta_k(\tau)]=\rho_k$.

Backhaul link buffer queue is also modeled as a multi-class M/G/1/PS (Fig.~\ref{fig:queues}). Each flow in the backhaul queue belongs to a class $k \in \{ 1\cdots n_{RN}\}$, according to the transmitting RN. The model shown in Fig.~\ref{fig:queues} thus allows for a hierarchical flow level analysis. 
\begin{figure}
 \centering
\includegraphics[width=\linewidth]{./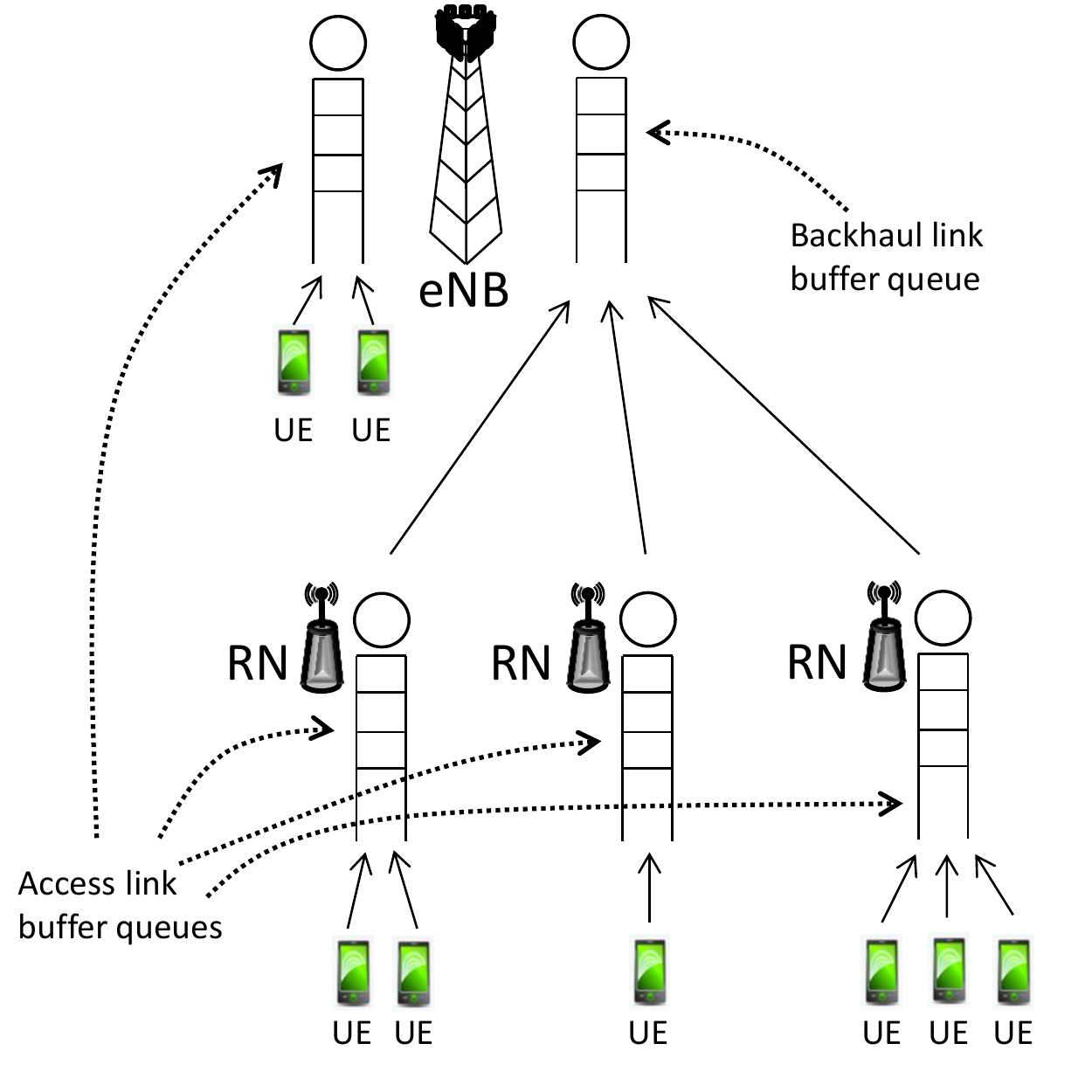} 
 \caption{Queuing model.}
\label{fig:queues} 
\end{figure} 

\subsection{Propagation Model}
Consider a UE transmitting on RB $\tau$, and located on $s$. We denote with $T(s)$ its transmit power, and with $G_k(s)$ the component of the channel gain towards station $k$ due to distance dependent attenuation and shadowing. We assume $G_k(s)$ constant on all RBs\footnote{The underlying assumption is that shadowing does not significantly change in time, for a fixed location. This is consistent with, e.g., the measurements in \cite{Kurose}.}. The power $P_k(s)$ received by $k$ from the considered UE is thus obtained as: $P_k(s,\tau)=T(s)G_k(s)\nu(s,\tau)$, 
where $\nu(s,\tau)$ is the variable component due to the fast fading. We adopt here a block Rayleigh fading model \cite{block_fading}, i.e., fast fading is constant on a RB, and fast fading realizations associated to any two distinct RBs or locations are independent of one another. The Full Compensation Power Control (FCPC) scheme adopted by the LTE standard \cite{36.814} is chosen to determine $T(s)$. We thus have:
$T(s)=\min\{ T_{\max} , \frac{\bar{P}}{G_k(s)} \}, \;s \in \mathcal{A}_k$,
where $T_{\max}$ is the UE maximum transmit power and $\bar{P}$ is a target received power broadcast by $k$ to all UEs.
We assume that $\bar{P}$ has the same value $\bar{P}{=}\bar{P}_{eNB}$ for all eNBs, and $\bar{P}{=}\bar{P}_{RN}$ for all RNs, for the sake of simplicity.    

\subsection{SINR model}\label{subsec:sinrmodel}
Consider a UE located in $s$ and a RB $\tau$. We define the instantaneous SINR $\gamma_k(s,\tau)$ experienced by station $k$ for the considered UE as:
\begin{equation}\label{eq:sinrdef}
 \gamma_k(s,\tau)=\frac{P_{k}(s,\tau)}{\sum_{j \in \mathcal{K}, j \neq k} \Theta_j(\tau) P_{k}(f_j(\tau),\tau) +N}, 
\end{equation}
where $f_j(\tau) {\in} \mathcal{A}_j$ yields the position of the UE scheduled by station $j$ on RB $\tau$, and $N$ is the thermal noise power. In the following, we drop the dependency of all variables on $\tau$, for the sake of simplicity. Let $p_{\gamma,k}(s,z), z \geq 0$ be the PDF of $\gamma_k(s)$. Following \cite{kostic, fischione}, we approximate $p_{\gamma,k}(s,z)$ by a lognormal distribution, i.e., $\gamma_k(s){\sim} \text{Log}{-}\mathcal{N}(\mu_{\gamma,k},\sigma_{\gamma,k})$ (see App.~\ref{appendix:momentmatch} for the derivation of $\mu_{\gamma,k}$ and $\sigma_{\gamma,k})$.



\subsection{Scheduling Policy}
Let assume that stations have perfect Channel State Information (CSI) about all served UEs and consider a given frame of station $k$. We label the UEs served by $k$ during the considered frame with an index $i=1\cdots U$, while their locations are denoted by $s_1 \cdots s_U$ respectively. Every station implements a {\it Maximum Quantile Scheduling} \cite{bonaldsched,MQS2} (MQS) among served UEs. 
The MQS scheduler of station $k$ assigns RB $\tau$ to the UE $i$ that maximizes its instantaneous SINR $\gamma_k(s_i,\tau)$, with respect to $\gamma_k(s_i,\tau-h),h=1\cdots W-1$ over a window of $W$ RBs (more details can be found in App.~\ref{appendix:MQS}).
This policy has the property of being fair in RBs allocation between UEs, while maintaining a good throughput performance \cite{MQS2}. Moreover, \cite{MQS2} shows that the performance degradation introduced by an imperfect estimation of the SINR distribution can be lower than that incurred in practical implementations of other popular scheduling algorithms. The MQS scheduling differs from the well-known PF scheduler \cite{KellyPF}, because the latter allocates each radio resource to the user maximizing its scaled SINR $\gamma_k(s_i,\tau)/\bar{\gamma}_k(s_i)$ on $\tau$, where $\bar{\gamma}_k(s_i)$ represents 
the average SINR experienced by $k$ for user $i$ over the last $W$ RBs. 
However, for large values of $W$ its behavior approximates that of a PF scheduler (see e.g. \cite{bonaldsched}). 

Now, consider the distribution $\pi_{\gamma,k}(s_i,z)dz = \mathbb{P}(z {\leq} \gamma_k(s_i) {\leq} z{+}{dz}\hspace{2 pt}| \hspace{2 pt} f_k{=}s_i), s_i \in \mathcal{A}_k, z \geq 0$ of $\gamma_k(s_i)$ {\it conditioned to} $f_k{=}s_i$, i.e., the PDF of the SINR of a {\it scheduled} UE. Contrary to $p_{\gamma,k}(s_i, z)$, this distribution is not lognormal.
\begin{theorem}\label{th:pi}
The distribution $\pi_{\gamma,k}(s_i,z)$ can be expressed as:
\begin{equation}
  \pi_{\gamma,k}(s_i,z)=p_{\gamma,k}(s_i,z) \displaystyle \sum_{n=1}^{W} Q_n(i,z) T_k(W,n),
\end{equation}
where 
\begin{eqnarray}
Q_n(i,z)  &=& \binom{W-1}{n-1}\left( \int_{0}^{z} p_{\gamma,k}(s_i,t)dt\right)^{W-n} \times \notag \\ & & \left(1- \int_{0}^{z} p_{\gamma,k}(s_i,t)dt \right)^{n-1}, \notag \\
T_k(W,n)  &=& \frac{ W^2(1-\rho_k)} { \left(W-\rho_k(W-n)\right)^2}. \notag
\end{eqnarray}
\end{theorem}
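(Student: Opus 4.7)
The plan is to proceed via Bayes' theorem, writing
\[
\pi_{\gamma,k}(s_i,z)\;=\;p_{\gamma,k}(s_i,z)\,\frac{\mathbb{P}(f_k=s_i\mid \gamma_k(s_i)=z)}{\mathbb{P}(f_k=s_i)},
\]
so that the task reduces to evaluating the conditional scheduling probability in the numerator and identifying the normalizing constant.

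First I would introduce the rank $R_i\in\{1,\dots,W\}$ of $\gamma_k(s_i,\tau)$ within its own observation window $\{\gamma_k(s_i,\tau-h)\}_{h=0}^{W-1}$. Because the block-Rayleigh fading assumption of Section~\ref{system_model} makes the $W-1$ past realizations i.i.d.\ copies of $\gamma_k(s_i)$, conditional on $\gamma_k(s_i,\tau)=z$ the count of past values that exceed $z$ is $\mathrm{Bin}(W-1,\,1-F_{\gamma,k}(s_i,z))$, where $F_{\gamma,k}(s_i,z)=\int_0^z p_{\gamma,k}(s_i,t)\,dt$; the binomial probability at $n-1$ successes is exactly the claimed $Q_n(i,z)=\mathbb{P}(R_i=n\mid \gamma_k(s_i)=z)$.

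Next I would evaluate $\mathbb{P}(f_k=s_i\mid R_i=n)$ by conditioning on $U$, the number of UEs in the access queue of station $k$ on the current RB. Under the continuous-quantile view of MQS, each of the other $U-1$ UEs has a rank that is independent and uniform on $\{1,\dots,W\}$ (its own window of i.i.d.\ fades), so UE $i$ wins scheduling with probability $((W-n)/W)^{U-1}$. I then average over the Palm distribution of $U$ induced by the M/G/1/PS queue---the size-biased geometric law $\mathbb{P}(U=u\mid s_i\in\text{queue})=u(1-\rho_k)^2\rho_k^{u-1}$ for $u\geq 1$---and collapse the series via the elementary identity $\sum_{u\geq 1}u\,y^{u-1}=(1-y)^{-2}$ evaluated at $y=\rho_k(W-n)/W$:
\[
\sum_{u=1}^{\infty} u(1-\rho_k)^2\rho_k^{u-1}\Bigl(\tfrac{W-n}{W}\Bigr)^{u-1}\;=\;\frac{(1-\rho_k)^2}{\bigl(1-\rho_k(W-n)/W\bigr)^2}\;=\;(1-\rho_k)\,T_k(W,n).
\]

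Combining the two conditionings gives $\mathbb{P}(f_k=s_i\mid \gamma_k(s_i)=z)=(1-\rho_k)\sum_{n=1}^{W}Q_n(i,z)\,T_k(W,n)$, and dividing by the marginal $\mathbb{P}(f_k=s_i)$, which carries the matching $(1-\rho_k)$ coming from the fairness of MQS in the long run, yields the stated formula. The delicate step will be justifying that the queue length $U$ seen by UE $i$ is governed by the size-biased geometric rather than by the plain stationary geometric $(1-\rho_k)\rho_k^{u-1}$: it is precisely this extra factor of $u$ inside the series that promotes the denominator from linear to quadratic in $W-\rho_k(W-n)$, producing the $a_n^{-2}$ structure that defines $T_k(W,n)$. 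Once that identification is in place, the remaining manipulations reduce to routine geometric-series and partial-fraction algebra.
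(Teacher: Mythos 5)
Your proposal is correct and follows essentially the same route as the paper's proof in App.~\ref{appendix:MQS}: Bayes' theorem, the binomial law of the rank given $\gamma_k(s_i,\tau)=z$ yielding $Q_n(i,z)$, the win probability $((W-n)/W)^{U-1}$ from the independent uniform ranks of the other $U-1$ UEs, and a geometric-series summation over the queue length producing $T_k(W,n)$. The only difference is bookkeeping of the factor $U$: the paper obtains it from $\mathbb{P}(f_k(\tau)=s_i\mid U)=1/U$ in the Bayes denominator and then averages over the plain conditional geometric $\mathbb{P}(x_k=U\mid x_k>0)=\rho_k^{U-1}(1-\rho_k)$, whereas you place it in the size-biased Palm law of $U$ and divide by $\mathbb{E}[1/U]=1-\rho_k$ --- algebraically the same computation, and arguably a slightly cleaner justification of the queue-length averaging.
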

\begin{proof}
See App.~\ref{appendix:MQS}.
\end{proof}

\subsection{UEs Physical Data Rate}
We define the function $C(\gamma)>0$, yielding the physical data rate achieved, when the receiver experiences SINR $\gamma$ (or equivalently the throughput of a user, if it were alone in the serving area of its serving station). We assume that $C(\gamma)$ is non decreasing in $\gamma$.

\section{Performance Evaluation} \label{perf_eval}

\def\dsp{\displaystyle}

\def\ie{\mbox{ {\em i.e.} }} 
\def\st{\mbox{ {\em s.t.} }} 

\def\st{\mbox{ s.t. }} 
\def\resp{\mbox{resp.~}}

\def\Demo{
\mbox{\em{Proof: }}}
\def\QED{\ \ \hfill QED \enspace.}
\def\ra{\rightarrow}
\def\Ra{\Rightarrow}
\def\lra{\leftrightarrow}
\def\Lra{\Leftrightarrow}
\def\cond{\ |\ }
\def\Xt{X^t}
\def\alphait{\alpha_i(t)}
\def\alphajt{\boldsymbol{\alpha}_{-i}(t)}
\def\Pt{\mathbb{P}_{\boldsymbol{\alpha}}}
\def\alphai{\alpha_i}
\def\alphaj{\boldsymbol{\alpha}_{-i}}
\def\ui{u_i}
\def\uj{\boldsymbol{u_j}}
\newcommand{\bbbe}{{\mathbb{E}}} 
\def\bbbone{{\mathchoice {\rm 1\mskip-4mu l} {\rm 1\mskip-4mu l} {\rm
1\mskip-4.5mu l} {\rm 1\mskip-5mu l}}} 
\def\charact#1{\bbbone_{#1}} 
\def\Esp#1{\bbbe\, \left[\, {#1} \, \right]} 
\def\Espt#1{\bbbe_{\boldsymbol{\alpha}}\, \left[\, {#1} \,\right]} 
\def\Esptt#1#2{\Espt{#2}} 

In this section, we define and derive our performance parameters in terms of delay and energy consumption per transmitted bit. We first obtain the expressions of access and backhaul link loads, and then we use them to get flow transmission delays. 

\subsection{Access Link Load} \label{subsec:accessload}

Let focus on station $k$, and recall that $\varrho_k(s), s \in \mathcal{A}_k$ is the contribution of $s$ to the load $\rho_k$ of station $k$. 
We have the following Lemma:
\begin{lemma}
The local contribution $\varrho_k(s),s \in \mathcal{A}_k$ to the load of the access link buffer queue of station $k$ is expressed by:
\begin{equation} \label{eq:rhoks}
 \varrho_k(s)=\frac{1}{1-\beta}\mathbb{E}_{\tau}\left[ \frac{\bar{\omega}\phi(s)}{C(\gamma_k(s,\tau))} \middle| f_k(\tau)=s \right].
\end{equation}
\end{lemma}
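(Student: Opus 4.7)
The plan is to apply the classical M/G/1/PS load formula to the access link queue of station $k$, viewing each flow transmission as a class-$s$ ``job'' indexed by its originating location. Under this decomposition, the local contribution $\varrho_k(s)\,ds$ is simply the flow arrival rate in area $ds$ multiplied by the mean real-time service duration of such a flow, and the stated identity will then follow by expressing the service time through the per-bit service rate and the access/backhaul time-sharing.

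First I would observe that, by the Poisson traffic model, flows originate at $s$ at rate $\lambda(s) = \bar{\omega}\phi(s)/\xi$ per unit area, each carrying on average $\xi$ bits. Next I would compute the mean time a flow from $s$ spends being served. A flow is transmitted on a sequence of RBs $\tau$ that the scheduler grants to UEs at $s$; on each such RB the server runs at rate $C(\gamma_k(s,\tau))$ bits per access-time unit, so a single bit consumes $1/C(\gamma_k(s,\tau))$ access-time units. Averaging over the conditional SINR law supplied by Theorem~1, the mean access time per bit equals $\mathbb{E}_\tau[1/C(\gamma_k(s,\tau))\,|\,f_k(\tau)=s]$, and a whole flow therefore occupies on average $\xi\,\mathbb{E}_\tau[1/C\,|\,f_k=s]$ access-time units. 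Converting to real time, and using the fact that the access portion occupies a fraction $1-\beta$ of each frame, the mean real-time service duration of a flow from $s$ is $\xi\,\mathbb{E}_\tau[1/C\,|\,f_k=s]/(1-\beta)$. Plugging this into the standard PS contribution $\varrho_k(s) = \lambda(s)\,\mathbb{E}[S_s]$ and cancelling $\xi$ yields
\[
  \varrho_k(s) \;=\; \frac{\bar{\omega}\phi(s)}{1-\beta}\,\mathbb{E}_\tau\!\left[\frac{1}{C(\gamma_k(s,\tau))}\,\middle|\,f_k(\tau)=s\right],
\]
which, after reabsorbing the $\tau$-independent factor $\bar{\omega}\phi(s)$ into the expectation, is exactly the claim.

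The main obstacle I anticipate is justifying the correct conditional distribution against which $1/C$ must be averaged: the relevant law is $\pi_{\gamma,k}(s,\cdot)$ rather than the unconditional $p_{\gamma,k}(s,\cdot)$, because a bit from $s$ is only actually processed on those RBs where the scheduler selects a user at $s$. Theorem~1 has already done the work of identifying this conditional distribution under the MQS scheduler and of building in the dependence on the load through the factor $T_k(W,n)$, so once the bit-level decomposition of service time is accepted, the expectation in the lemma is simply shorthand for integration against $\pi_{\gamma,k}(s,z)$. A secondary subtlety will be the legitimacy of the ``bit-by-bit'' decomposition of flow service time, which implicitly relies on the assumption from Section~\ref{system_model} that the typical RB payload is much smaller than $\xi$, so that a flow spans many RBs, the per-bit quantities concentrate, and the multi-class PS decomposition $\rho_k=\int_{\mathcal{A}_k}\varrho_k(s)\,ds$ used at the outset is meaningful.
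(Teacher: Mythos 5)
Your proof is correct and follows the same route as the paper: the paper simply states that the load contribution is the traffic density $\bar{\omega}\phi(s)$ divided by the rate of a scheduled UE in $s$, with the factor $1/(1-\beta)$ accounting for the backhaul share of the frame, which is exactly the $\varrho_k(s)=\lambda(s)\,\mathbb{E}[S_s]$ decomposition you spell out. Your version merely makes explicit the cancellation of $\xi$ and the role of the conditional law $\pi_{\gamma,k}$, which the paper leaves implicit.
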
 
\begin{proof}
The contribution to the load is expressed as the ratio of the traffic density $\bar{\omega}\phi(s)$ in $s$ to the uplink rate achieved by a scheduled UE in $s$. The term $(1-\beta)$ takes into account the quota of the frame RBs dedicated to the backhaul link.  
\end{proof}

\begin{theorem} \label{th:rho}
The load $\rho_k$ of the access link of station $k$ can be expressed as:
\begin{equation}\label{eq:rho_logn}
    \rho_k=\frac{\bar{\omega}}{1-\beta}  \int_{\mathcal{A}_k} \phi(s)  \int_{0}^{\infty}\frac{\pi_{\gamma,k}(s,z)}{C(z)}{dz}{ds}.
\end{equation}
\end{theorem}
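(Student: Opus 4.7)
The plan is a direct reduction: substitute Lemma~1 into the spatial integral defining $\rho_k$, and then rewrite the conditional expectation that appears using the density $\pi_{\gamma,k}$ introduced just before Theorem~1.

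First, I would start from $\rho_k=\int_{\mathcal{A}_k} \varrho_k(s)\, ds$ and plug in the expression of $\varrho_k(s)$ given by Lemma~1. Since $\bar{\omega}\phi(s)$ depends only on the location $s$ and not on the radio block index $\tau$, it factors out of the conditional expectation, giving
\begin{equation*}
\varrho_k(s) = \frac{\bar{\omega}\phi(s)}{1-\beta}\, \mathbb{E}_{\tau}\left[\frac{1}{C(\gamma_k(s,\tau))} \,\middle|\, f_k(\tau)=s\right].
\end{equation*}

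Second, I would recognize the conditional expectation as an integral against the conditional density of $\gamma_k(s,\tau)$. By the definition introduced just before Theorem~1, $\pi_{\gamma,k}(s,z)\, dz$ is the law of $\gamma_k(s,\tau)$ given $f_k(\tau)=s$, so the transfer formula applied to the nonnegative measurable function $1/C$ yields
\begin{equation*}
\mathbb{E}_{\tau}\left[\frac{1}{C(\gamma_k(s,\tau))} \,\middle|\, f_k(\tau)=s\right] = \int_0^{\infty} \frac{\pi_{\gamma,k}(s,z)}{C(z)}\, dz.
\end{equation*}
Inserting this identity into the spatial integral and invoking Tonelli's theorem to exchange the order of integration (the integrand is nonnegative since $C>0$) delivers the stated expression.

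I do not anticipate any genuine obstacle: the theorem is essentially a bookkeeping restatement of Lemma~1 via the definition of $\pi_{\gamma,k}$. The only point worth flagging is that, strictly speaking, the conditioning event $\{f_k(\tau)=s\}$ has zero probability at a given point, so the conditional expectation must be interpreted in the Radon--Nikodym sense with respect to the spatial density of scheduled positions that is already implicit in the scheduling subsection; this imposes no additional analytic work beyond what the system model provides.
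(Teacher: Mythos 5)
Your proof is correct and follows essentially the same route as the paper: integrate the local contributions $\varrho_k(s)$ from Lemma~1 over $\mathcal{A}_k$ and rewrite the conditional expectation $\mathbb{E}_{\tau}[1/C(\gamma_k(s,\tau))\mid f_k(\tau)=s]$ as $\int_0^{\infty}\pi_{\gamma,k}(s,z)/C(z)\,dz$. Your added remarks on Tonelli and on interpreting the zero-probability conditioning event are welcome precisions the paper glosses over, but they do not change the argument.
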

\begin{proof}
The load $\rho_k$ is the sum of the contributions $\varrho_k(s)$ over $\mathcal{A}_k$: 
$\rho_k= \int_{\mathcal{A}_k} \varrho_k(s){ds}$. 
Now, the term $\mathbb{E}_{\tau}[ 1/C(\gamma_k(s,\tau))  | f_k(\tau)=s ]$ in (\ref{eq:rhoks}) can be expressed according the law of total probability, obtaining (\ref{eq:rho_logn}).
\end{proof}

Note first that if all UEs transmit at power $T(s)<T_{\max}, \forall s \in \mathcal{A}_k$, then $\pi_{\gamma,k}(s,z)=\pi_{\gamma,k}(z), \;\forall s \in \mathcal{A}_k$ and equation (\ref{eq:rho_logn}) reduces to: $\rho_k=\frac{\bar{\omega}\Phi_k}{1-\beta} \mathbb{E}_{\gamma}\left[\frac{1}{C(\gamma)}\right]$. $\Phi_k$ can thus be seen as the {\it effective area} covered by $k$, in terms of traffic density.

Note then that $\rho_k$ is a function of $\boldsymbol{\rho}_{-k}$, as the load depends on the interference from other stations. 
We define the operator $F_k(\boldsymbol{\rho}_{-k})=\bar{\omega}/(1-\beta)  \int_{\mathcal{A}_k} \phi(s)  \int_{0}^{\infty}\pi_{\gamma,k}(s,z)/C(z){dz}{ds}$, which yields the value of $\rho_k$ corresponding to a given $\boldsymbol{\rho}_{-k}$.

\begin{lemma} \label{lm:finite}
The operator $\boldsymbol{F}(\boldsymbol{\rho})=(F_1(\boldsymbol{\rho}_{-1}) \cdots F_{K}(\boldsymbol{\rho}_{-K})  )$ maps $\boldsymbol{\rho}$ to a finite $K$-dimensional interval.
\end{lemma}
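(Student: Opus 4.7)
The image being a finite $K$-box amounts to a uniform, finite upper bound $b_k$ and a finite lower bound $a_k$ on $F_k(\boldsymbol{\rho}_{-k})$ as $\boldsymbol{\rho}_{-k}$ ranges over the stability region $[0,1)^{K-1}$. Since $\phi$, $\pi_{\gamma,k}$ and $1/C$ are all non-negative, $F_k \geq 0$ is immediate and $a_k = 0$ works; the whole effort reduces to producing the upper bound.

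The plan is first to peel off the scheduling factor in the integrand via Theorem~\ref{th:pi}. Writing $F(z) = \int_0^z p_{\gamma,k}(s,t)\,dt$, the coefficients $Q_n(i,z) = \binom{W-1}{n-1} F(z)^{W-n}(1-F(z))^{n-1}$ form a binomial partition of unity, so $\sum_{n=1}^{W} Q_n(i,z) = 1$ pointwise in $z$. The weights $T_k(W,n) = W^2(1-\rho_k)/(W-\rho_k(W-n))^2$ are manifestly bounded over $(n,\rho_k) \in \{1,\ldots,W\}\times[0,1]$ by a constant $M(W)$ depending only on the scheduling window size. Hence $\pi_{\gamma,k}(s,z) \leq M(W)\,p_{\gamma,k}(s,z)$ and the task reduces to bounding $\int_0^\infty p_{\gamma,k}(s,z)/C(z)\,dz$.

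Next I would split that inner integral at some $z_0 > 0$. On $[z_0, \infty)$, monotonicity of $C$ gives $1/C(z) \leq 1/C(z_0)$ and the integral of $p_{\gamma,k}$ is at most $1$. On $(0, z_0]$, the lognormality of $p_{\gamma,k}$ from Section~\ref{subsec:sinrmodel} makes the density decay like $\exp(-(\ln z - \mu)^2/(2\sigma^2))/z$ as $z \to 0^+$, vanishing faster than any inverse polynomial and therefore dominating any physically reasonable blow-up of $1/C(z)$ at zero (for Shannon capacity, $1/C(z) \sim \ln 2 /z$). A change of variable $u = \ln z$ confirms integrability. The outer integration over the finite-area region $\mathcal{A}_k$ against the bounded density $\phi$ is then routine.

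Finally, to obtain a single upper bound valid for all $\boldsymbol{\rho}_{-k}$, I would appeal to the continuous dependence of the lognormal parameters $(\mu_{\gamma,k}(s),\sigma_{\gamma,k}(s))$ on $\boldsymbol{\rho}_{-k}$ stemming from the moment matching of App.~\ref{appendix:momentmatch}: a continuous function on the compact set $[0,1]^{K-1}$ attains its supremum, which is finite. The main obstacle I expect lies precisely in this uniformity step at the boundary $\boldsymbol{\rho}_{-k} \to \mathbf{1}$, where interference peaks and $\mu_{\gamma,k}$ drifts toward low SINRs that place more mass of $p_{\gamma,k}$ where $1/C$ is large; the bounded transmit power $T_{\max}$ and the strictly positive thermal noise $N$ must be invoked to keep $(\mu_{\gamma,k}, \sigma_{\gamma,k})$ in a compact range so that the lognormal tail continues to dominate the vanishing of $C$ at zero.
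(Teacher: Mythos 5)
Your argument is correct in substance, but it is a genuinely different proof from the one in the paper. The paper argues by \emph{monotonicity}: $F_k(\boldsymbol{\rho}_{-k})$ is increasing in each interfering load $\rho_j$ (more load $\Rightarrow$ more interference $\Rightarrow$ lower SINR $\Rightarrow$ lower rate $\Rightarrow$ larger load contribution), so the supremum over the stability region is attained at $\boldsymbol{\rho}_{-k}=\mathbf{1}$ and one simply sets $\rho_k^{\max}=F_k(1\cdots 1)$; the whole uniformity question collapses to a single evaluation of $F_k$. You instead bound the integrand directly --- observing that the $Q_n$ form a binomial partition of unity and $T_k(W,n)\leq W^2$, so $\pi_{\gamma,k}(s,z)\leq W^2\,p_{\gamma,k}(s,z)$, then splitting the $z$-integral and invoking the lognormal decay near $z=0$ --- and you recover uniformity over $\boldsymbol{\rho}_{-k}$ by compactness and continuity of the moment-matched parameters of App.~\ref{appendix:momentmatch}. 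The paper's route is shorter and yields the tighter, explicit box $\prod_k[0,\rho_k^{\max}]$ that is reused immediately in the Brouwer fixed-point argument; but it silently assumes both that the monotonicity holds through the lognormal approximation and the conditioning in $\pi_{\gamma,k}$, and that $F_k(\mathbf{1})$ is itself finite, i.e., that $\int_0^\infty \pi_{\gamma,k}(s,z)/C(z)\,dz$ converges at maximal interference. Your proof addresses exactly that convergence question head-on (the interplay between the lognormal tail at $z\to 0^+$ and a possible blow-up of $1/C(z)$), at the price of a cruder constant and an extra hypothesis on the behaviour of $C$ near zero that the paper's bare assumptions ($C>0$, non-decreasing) do not strictly guarantee --- in the numerical section $C$ is an MCS staircase bounded away from zero on its support, which makes both proofs unconditional. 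Either bound establishes the lemma; yours is the more self-contained of the two.
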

\begin{proof}
For all stations $k \in \mathcal{K}$, the corresponding $F_k(\boldsymbol{\rho}_{-k}) $ is increasing with respect to every coordinate, as an increase in the load of any station $j\neq k, j\in \mathcal{K}$ produces an increase in the interference received at $k$.
We define $\forall k, \rho_k^{\max}=F_k(1 \cdots 1)$, which represents an upper bound for the load of $k$. Then, $\forall k, \forall \rho_k, F_k(\boldsymbol{\rho}_{-k})=F_k(\max\{ \boldsymbol{1} , \boldsymbol{\rho}_{-k} \})\leq F_k(1 \cdots 1)=\rho_k^{\max}$.
Hence, 
\begin{equation}\label{eq:limited}
 \boldsymbol{F}( \boldsymbol{\rho}) \in \prod_{k\in \mathcal{K}} [0 , \rho_k^{\max}], \;\rho_k^{\max}\in \mathbb{R}^+, \;\forall \hspace{1 pt} k \in \mathcal{K}.
\end{equation} 
\end{proof}
\begin{theorem}
There exists at least one $\boldsymbol{\rho}^*$ such that $\boldsymbol{F}(\boldsymbol{\rho}^*)=\boldsymbol{\rho}^*, \boldsymbol{\rho}^* \in \prod_{k\in \mathcal{K}} [0 , \rho_k^{\max}]$.
\end{theorem}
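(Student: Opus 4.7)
The natural approach is to apply Brouwer's fixed point theorem on the compact convex set $\mathcal{R} := \prod_{k \in \mathcal{K}}[0, \rho_k^{\max}] \subset \mathbb{R}^K$. Lemma \ref{lm:finite} already supplies the self-map property $\boldsymbol{F}(\mathcal{R}) \subseteq \mathcal{R}$, so the whole argument reduces to verifying that $\boldsymbol{F}$ is continuous on $\mathcal{R}$. Once continuity is in hand, Brouwer applies directly and yields a fixed point $\boldsymbol{\rho}^{*}$.

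To verify continuity of each coordinate $F_k$, I would trace the dependence of the integrand $\pi_{\gamma,k}(s,z)/C(z)$ on $\boldsymbol{\rho}$. By Theorem \ref{th:pi}, $\pi_{\gamma,k}$ is a product of the lognormal density $p_{\gamma,k}(s,z)$ with a polynomial in its CDF $\int_0^z p_{\gamma,k}(s,t)\,dt$ and with the weights $T_k(W,n) = W^2(1-\rho_k)/(W - \rho_k(W-n))^2$. The weights $T_k(W,n)$ are continuous in $\rho_k$ on $[0,1]$ because the denominator $W - \rho_k(W-n)$ is linear and bounded below by $n > 0$, so it never vanishes. The lognormal parameters $\mu_{\gamma,k}$ and $\sigma_{\gamma,k}$ produced by moment matching (App.~\ref{appendix:momentmatch}) are continuous in $\boldsymbol{\rho}_{-k}$: the first two moments of $\gamma_k(s)$ are smooth (polynomial/rational) functions of the Bernoulli parameters $\mathbb{E}_{\tau}[\Theta_j(\tau)] = \rho_j$ that drive the interference. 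Hence $\pi_{\gamma,k}(s,z)$ is jointly continuous in $(\boldsymbol{\rho}, s, z)$.

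The main technical obstacle is upgrading this pointwise continuity to continuity of the iterated integral $F_k(\boldsymbol{\rho})$. I would argue by two applications of dominated convergence. On $\mathcal{R}$, the parameters $(\mu_{\gamma,k}, \sigma_{\gamma,k})$ vary in a compact set, so one can produce a $\boldsymbol{\rho}$-independent integrable majorant: the Gaussian decay of the lognormal density in $\log z$ controls both the large-$z$ tail and the $1/C(z)$ behavior near $z=0$ (since $C$ is non-decreasing and strictly positive), ensuring that the inner integral $\int_0^{\infty} \pi_{\gamma,k}(s,z)/C(z)\,dz$ is continuous in $\boldsymbol{\rho}$ for each $s$. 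A second application against the bounded weight $\phi$ on the bounded region $\mathcal{A}_k$ transfers continuity to $F_k(\boldsymbol{\rho})$. With the self-map property from Lemma \ref{lm:finite} and continuity of $\boldsymbol{F}$ both in hand, Brouwer's fixed point theorem closes the proof.
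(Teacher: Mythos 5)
Your proof takes essentially the same route as the paper: both invoke Brouwer's fixed point theorem on the compact box $\prod_{k\in\mathcal{K}}[0,\rho_k^{\max}]$ supplied by Lemma~\ref{lm:finite}. The only difference is that the paper simply asserts continuity of $\boldsymbol{F}$ as a hypothesis, whereas you actually sketch its verification (continuity of $T_k(W,n)$, of the moment-matched lognormal parameters in $\boldsymbol{\rho}$, and two dominated-convergence passes), which is a welcome filling-in of a step the paper glosses over rather than a different approach.
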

\begin{proof}
This follows from Lemma \ref{lm:finite} and (\ref{eq:limited}): if $\boldsymbol{F}$ is a continuous mapping from an $K$-dimensional closed interval to itself, then the Brouwer's fixed point theorem guarantees the existence of at least one fixed point.
\end{proof}

It is not feasible in general to draw any conclusion about the uniqueness of the fixed point. However, following the approach of \cite{Rong11}, we start from a single cell without interference ($\boldsymbol{\rho}=\boldsymbol{0}$) and iterate function $\boldsymbol{F}$, so as to model a scenario of increasing traffic.  

\subsection{Backhaul Link Load} \label{subsec:bhload}
We now focus on the backhaul link queue of the eNB of cell $c$. Let $j \in \mathcal{K}_c$ be the index of the eNB and $\rho_{BL,j}$ the load of the queue. The probability for a given RN $k$ served by $j$ to be selected for backhaul transmission is equal to $\Phi_k/\sum_{k \in \mathcal{K}_c}\Phi_k$, where $\Phi_k = \int_{\mathcal{A}_k}\phi(s){ds}$. Hence, $\rho_{BL,j}$ can be written as:
\begin{equation}
 \rho_{BL,j}=\frac{\bar{\omega}}{\beta} \displaystyle \sum_{k=1}^{n_{RN}} \frac{\Phi_k}{R_{BL,j,k}}, 
\end{equation}
where the inverse of $R_{BL,j,k}$ is the average of the inverse of the rate on the backhaul link between RN $k$ and eNB $j$, which depends on propagation conditions, and is equal to:
$R_{BL,j,k}=1/\mathbb{E}_{\tau}[ \frac{1}{C\left( \gamma_{BL,j,k}(\tau) \right)} ]$,
where $\gamma_{BL,j,k}(\tau)$ denotes the SINR on the backhaul of RN $k$ served by eNB $j$, during RB $\tau$.  
A detailed derivation of $R_{BL,j,k}$ is presented in App.~\ref{appendix:back}. 

\subsection{Flow Average Transmission Delay}
The flow average transmission delay is an effective parameter to measure network uplink performance. It is the sum of the average access delay and the average backhaul delay. In this section, we derive its formulation. Consider a user located in $s$, served by RN $k$ with donor eNB $j$. We denote by $D_k(s)$ the average access delay of a flow in $s$, by $D_k$ the average delay of a flow served by $k$ and by $D_{BL,j,k}$ the average delay on the backhaul link.


\begin{lemma}\label{lm:da}
The average access delay in $s$ is expressed by:
\begin{equation}\label{eq:da}
    D_k(s) =   \frac{\xi }{1-\rho_k}   \int_{0}^{\infty}\frac{\pi_{\gamma,k}(s,z)}{(1-\beta)C(z)}{dz}.
\end{equation}
\end{lemma}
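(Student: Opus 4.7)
The plan is to recognize equation (\ref{eq:da}) as the standard M/G/1/PS sojourn-time formula applied to the access-link queue of station $k$, evaluated on the sub-population of flows originating at location $s$.

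First I would recall the conditional service time. A flow of average size $\xi$ bits transmitted from $s$ sees, during any RB in which it is scheduled, an instantaneous rate $C(\gamma_k(s,\tau))$ on the physical layer, but only a fraction $(1-\beta)$ of the frame is devoted to the access link. Conditioning on the event $f_k(\tau)=s$ (the flow is the one picked by the MQS scheduler on $\tau$), the PDF of the received SINR is precisely $\pi_{\gamma,k}(s,z)$ by definition of the conditional distribution introduced in Section~\ref{subsec:sinrmodel}. Hence the mean service time (time to drain $\xi$ bits if the tagged flow were alone in the queue) is
\begin{equation*}
\bar{\sigma}_k(s) \;=\; \xi \,\mathbb{E}_{\tau}\!\left[\frac{1}{(1-\beta)\,C(\gamma_k(s,\tau))} \,\middle|\, f_k(\tau)=s\right] \;=\; \frac{\xi}{1-\beta}\int_{0}^{\infty}\frac{\pi_{\gamma,k}(s,z)}{C(z)}\,dz.
\end{equation*}

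Next I invoke the queueing model already set up in Section~\ref{system_model}: the access-link buffer of station $k$ is an M/G/1/PS queue with offered load $\rho_k<1$ (given by Theorem~\ref{th:rho}). For the processor-sharing discipline, the classical insensitivity result states that the mean sojourn time of a customer requiring on average $\bar{\sigma}$ units of service is $\bar{\sigma}/(1-\rho)$, independently of the service-time distribution. Applying this to the tagged flow yields
\begin{equation*}
D_k(s) \;=\; \frac{\bar{\sigma}_k(s)}{1-\rho_k} \;=\; \frac{\xi}{1-\rho_k}\int_{0}^{\infty}\frac{\pi_{\gamma,k}(s,z)}{(1-\beta)\,C(z)}\,dz,
\end{equation*}
which is (\ref{eq:da}).

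The only delicate point, and the one I would emphasize, is justifying the use of the conditional SINR distribution $\pi_{\gamma,k}(s,z)$ rather than the marginal $p_{\gamma,k}(s,z)$: the relevant service rate is the one experienced by a flow \emph{while it is being served}, i.e. conditioned on being the one scheduled on the current RB. This is exactly the quantity characterized in Theorem~\ref{th:pi}, so once that conditioning is pointed out, the rest is the textbook M/G/1/PS formula together with the bookkeeping factor $1/(1-\beta)$ accounting for the fraction of RBs reserved for the backhaul.
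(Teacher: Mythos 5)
Your proof is correct and follows essentially the same route as the paper's: both compute the conditional mean service requirement $\xi\,\mathbb{E}_\tau[1/((1-\beta)C(\gamma_k(s,\tau)))\mid f_k(\tau)=s]$ using the scheduled-UE SINR density $\pi_{\gamma,k}(s,z)$ and then divide by $1-\rho_k$. The only cosmetic difference is that you invoke the M/G/1/PS sojourn-time (insensitivity) formula as a black box, whereas the paper re-derives the factor $1/(1-\rho_k)$ explicitly by averaging the number of concurrent UEs $x_k$ over the geometric distribution $\mathbb{P}(x_k=U\mid x_k>0)=\rho_k^{U-1}(1-\rho_k)$.
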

\begin{proof}
The delay to transmit a bit of information is the inverse of the UE rate, multiplied by the number $x_k$ of UEs served by the same station during a frame (because each UE is scheduled on a fraction $1/x_k$ of the RBs with MQS). 
The average transmission delay for the whole flow is hence given by:
$D_k(s)= \xi \mathbb{E}_{\tau}\left[ \frac{x_k}{(1-\beta)C(\gamma_k(s,\tau))}\middle | f_k(\tau)=s \right]$. 
Considering that $\mathbb{P}(x_k=U|x_k>0)=\rho_k^{U-1}(1-\rho_k)$, the law of total probability can be used to average $D_k(s)$ over $x_k>0$:
$D_k(s)=\displaystyle \sum_{U=1}^{\infty}\mathbb{P}(x_k=U)\frac{\xi}{1-\beta} U \mathbb{E}_{\tau}[ \frac{1}{C(\gamma_k(s,\tau))} | f_k(\tau)=s]$,
obtaining expression (\ref{eq:da}).
\end{proof}

\begin{corollary}
The average access delay for a UE served by $k\in \mathcal{K}_{R}$ is:
\begin{equation}\label{eq:delay_1}
  D_k = \int_{\mathcal{A}_k}\frac{\phi(s)}{\Phi_k}D_k(s){ds}.
\end{equation}
\end{corollary}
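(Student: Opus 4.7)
The plan is to obtain the marginal delay $D_k$ by averaging the location-specific delay $D_k(s)$ given in Lemma \ref{lm:da} against the conditional spatial density of a typical flow served by station $k$. The key ingredient is the traffic model: UE arrivals form a spatial Poisson point process of intensity $\lambda(s)=\bar{\omega}\phi(s)/\xi$, so the (unconditional) spatial PDF of an arrival over $\mathcal{A}_{nw}$ is $\phi(s)/A_{nw}$ (after absorbing the constant $\bar{\omega}/\xi$ into the normalization). Restricting attention to arrivals in $\mathcal{A}_k$ and renormalizing by the total mass $\int_{\mathcal{A}_k}\phi(s){ds}=\Phi_k$, the conditional PDF of the location of a flow served by station $k$ becomes $\phi(s)/\Phi_k$ on $\mathcal{A}_k$.

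First I would state this conditional density explicitly, noting that independent thinning of the arrival process by the serving-station assignment rule preserves the Poisson nature and leaves the intensity proportional to $\lambda(s)\mathbf{1}_{\mathcal{A}_k}(s)$; the location of a typical served flow is therefore distributed as $\phi(s)/\Phi_k$ on $\mathcal{A}_k$. Then, applying the law of total expectation to the per-location delay furnished by Lemma \ref{lm:da},
\begin{equation}
D_k \;=\; \mathbb{E}_{s}\!\left[D_k(s)\,\middle|\, s\in\mathcal{A}_k\right] \;=\; \int_{\mathcal{A}_k} \frac{\phi(s)}{\Phi_k}\, D_k(s)\, {ds},
\end{equation}
which is precisely the claimed formula.

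I do not expect a real obstacle here: the statement is essentially a bookkeeping step that promotes $D_k(s)$ to the cell-level average $D_k$. The only subtlety worth mentioning is the justification for using the spatial PDF $\phi(s)/\Phi_k$ rather than some load-weighted density; this is legitimate because $D_k(s)$ already accounts (through Lemma \ref{lm:da} and its use of the queue occupancy distribution $\mathbb{P}(x_k=U)=\rho_k^{U-1}(1-\rho_k)$) for the delay conditional on the flow's location, so averaging over the arrival density—not over a sojourn-time-weighted density—is what yields the mean delay of a typical arriving flow served by $k$.
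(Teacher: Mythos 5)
Your proof is correct, but it takes a genuinely different route from the paper's. You obtain $D_k$ by directly averaging the conditional per-location delay $D_k(s)$ of Lemma~\ref{lm:da} against the spatial law $\phi(s)/\Phi_k$ of a typical flow served by $k$ (tower property over the flow's arrival location), and you correctly flag the one real subtlety: the weighting must be the arrival density, not a sojourn-weighted density, precisely because $D_k(s)$ already averages over the queue occupancy $\mathbb{P}(x_k=U)=\rho_k^{U-1}(1-\rho_k)$. The paper instead invokes Little's law on the M/G/1/PS access queue, writing $D_k$ as the mean number of flows $\rho_k/(1-\rho_k)$ divided by the flow arrival rate into $\mathcal{A}_k$ (proportional to $\bar{\omega}\Phi_k/\xi$), and then substitutes the integral expression~(\ref{eq:rho_logn}) for $\rho_k$ to recover the integral of $D_k(s)$. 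The two arguments buy slightly different things: yours is self-contained and makes the probabilistic meaning of the weight $\phi(s)/\Phi_k$ transparent, while the paper's serves simultaneously as a consistency check that the load formula of Theorem~\ref{th:rho} and the local delay formula~(\ref{eq:da}) agree through Little's law. Both yield~(\ref{eq:delay_1}); no gap in your argument.
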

\begin{proof}
The statement can be verified by means of the Little's law:
$D_k =   \frac{\mathbb{E}_{\tau}\left[  x_{k}(\tau)   \right]  }{\bar{\omega} \Phi_k }= \frac{\rho_k}{1-\rho_k} \frac{1}{\bar{\omega}\Phi_k}$.
Now, we can substitute (\ref{eq:rho_logn}) to the numerator and conclude the proof.
\end{proof}

\begin{lemma}
The average backhaul link delay is expressed by:
\begin{equation} \label{eq:dbljk}
  D_{BL,j,k}=\frac{\xi}{(1-\rho_{BL,j})\beta R_{BL,j,k}}, k\in \mathcal{K}_R,j\in \mathcal{K}_B.
\end{equation}
\end{lemma}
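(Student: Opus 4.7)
The plan is to read the statement as a direct consequence of the multi-class M/G/1/PS queueing model adopted for the backhaul link in Section~\ref{system_model} (Fig.~\ref{fig:queues}), together with the computation of the average backhaul rate summarized by $R_{BL,j,k}$. Concretely, I would argue that in steady state a flow arriving at class $k$ (i.e., originating from RN $k$ and transiting through the backhaul of eNB $j$) experiences a mean sojourn time equal to its mean service time divided by $1-\rho_{BL,j}$, where $\rho_{BL,j}$ is the total load of the queue obtained in Section~\ref{subsec:bhload}.

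The first step is to make the effective service rate of class $k$ explicit. Only a fraction $\beta$ of the RBs is dedicated to the backhaul, and among those the rate on the $(j,k)$ backhaul link at RB $\tau$ is $C(\gamma_{BL,j,k}(\tau))$. Since each of the $\xi$ bits of a flow is transmitted on some RB of the backhaul sub-frame, the mean time to push an entire class-$k$ flow through an otherwise empty queue is
\begin{equation*}
\frac{\xi}{\beta} \, \mathbb{E}_{\tau}\!\left[\frac{1}{C(\gamma_{BL,j,k}(\tau))}\right] \;=\; \frac{\xi}{\beta \, R_{BL,j,k}},
\end{equation*}
by the very definition of $R_{BL,j,k}$ recalled above Appendix~\ref{appendix:back}. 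The key point here, as in Lemma~\ref{lm:da}, is that the relevant average is the harmonic one (the expectation of $1/C$), not the arithmetic one, because the rate averaging happens over bits within a flow rather than over flows.

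The second step is to invoke the standard sojourn-time result for the multi-class M/G/1/PS queue: thanks to the insensitivity property, the mean sojourn time of a class-$k$ customer depends on the per-class service-time distributions only through their means and equals the mean class-$k$ service time divided by $1-\rho_{\text{tot}}$, where $\rho_{\text{tot}}=\rho_{BL,j}$ is the sum over classes of the per-class loads. Substituting the expression obtained in the first step yields $D_{BL,j,k} = \xi/[(1-\rho_{BL,j})\beta R_{BL,j,k}]$, which is the claimed formula.

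The main subtlety, rather than an obstacle per se, is to justify that the hierarchical queue of Fig.~\ref{fig:queues} may indeed be analyzed as a bona-fide M/G/1/PS queue: one has to argue (i) that the process of flow departures from the access-link queue of RN $k$ that feeds the backhaul class $k$ can be treated as Poisson with rate $\bar\omega\Phi_k/\xi$, consistently with the flow-level Poisson assumption of the traffic model, and (ii) that replacing the time-varying instantaneous rate $C(\gamma_{BL,j,k}(\tau))$ by its average harmonic counterpart $\beta R_{BL,j,k}$ is legitimate, which is exactly the modelling choice already made for the access link in Section~\ref{subsec:accessload}. Once these points are accepted, the lemma follows from the classical PS-queue formula without further computation.
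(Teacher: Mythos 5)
Your proof is correct and follows essentially the same route as the paper's: both identify the bare flow transmission time as $\xi/(\beta R_{BL,j,k})$ (with the harmonic-mean backhaul rate) and then apply the processor-sharing slowdown factor $1/(1-\rho_{BL,j})$. The only cosmetic difference is that you invoke the standard multi-class M/G/1/PS sojourn-time formula directly, whereas the paper rederives the factor $1/(1-\rho_{BL,j})$ by averaging the number of concurrent backhaul flows over its geometric conditional distribution $\mathbb{P}(x_{BL,j}=v\mid x_{BL,j}>0)=\rho_{BL,j}^{v-1}(1-\rho_{BL,j})$.
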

\begin{proof}
Similarly to what performed for Lemma \ref{lm:da}, we proceed by expressing $D_{BL,j,k}$ as:
$D_{BL,j,k}=\frac{\xi}{\beta R_{BL,j,k}} \mathbb{E}_{\tau}[  x_{BL,j}(\tau) | x_{BL,j}(\tau)>0 ]$ and apply the law of total probability, obtaining: 
\begin{eqnarray} 
            D_{BL,j,k}  &=& \!\! \frac{\xi}{\beta R_{BL,j,k}} \displaystyle \sum_{v=1}^{\infty} P \left(x_{BL,j}(\tau)\!=\!v \middle | x_{BL,j}(\tau)\!>\!0 \right) \times \notag \\
            && \mathbb{E}_{\tau}\left[ x_{BL,j}(\tau) \middle |  x_{BL,j}(\tau)=v \right],\notag 
\end{eqnarray} 
where $x_{BL,j}$ is the number of flows in the backhaul queue of $j$ during the considered frame. Conventionally, we set $D_{BL,j,j}=0$. From this equation, we obtain (\ref{eq:dbljk}).
\end{proof}

\begin{proposition}
The average transmission delay $\bar{D}_c$ for an uplink flow transmitted in cell $c$ is equal to:
\begin{equation} \label{eq:delay}
 \bar{D}_c=\displaystyle \sum_{k \in \mathcal{K}_c}  \frac{\Phi_k}{A_k} (D_k+D_{BL,j,k}).
\end{equation}
\end{proposition}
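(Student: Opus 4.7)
The plan is to express $\bar{D}_c$ as a spatial average of a pointwise total flow delay, decompose that total into its access and backhaul contributions, and then collapse each piece onto the per-station averages already established.

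First, I fix an arbitrary UE location $s\in\mathcal{A}_c$ in the serving area of some station $k\in\mathcal{K}_c$ with donor eNB $j$. Because the access queue of $k$ and the backhaul queue of $j$ act in series on any flow relayed through $k$, the mean total delay for a flow originated at $s$ is $D_k(s)+D_{BL,j,k}$. The convention $D_{BL,j,j}=0$ introduced in the backhaul delay lemma lets this single expression cover both relay-served and eNB-served UEs, so the decomposition is uniform over $k\in\mathcal{K}_c$. Averaging over $\mathcal{A}_c$ with the spatial arrival intensity $\phi$ and using the fact that $\{\mathcal{A}_k\}_{k\in\mathcal{K}_c}$ partitions $\mathcal{A}_c$, the resulting integral splits as
\begin{equation*}
\sum_{k\in\mathcal{K}_c}\int_{\mathcal{A}_k}\phi(s)\bigl(D_k(s)+D_{BL,j,k}\bigr)\,ds.
\end{equation*}

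Inside each summand, $D_{BL,j,k}$ is constant on $\mathcal{A}_k$ and pulls out as $\Phi_k D_{BL,j,k}$, while $\int_{\mathcal{A}_k}\phi(s)D_k(s)\,ds$ equals $\Phi_k D_k$ by the corollary that defines $D_k$ as the $\phi/\Phi_k$-weighted mean of $D_k(s)$ on $\mathcal{A}_k$. Combining the two contributions yields $\sum_{k\in\mathcal{K}_c}\Phi_k\bigl(D_k+D_{BL,j,k}\bigr)$, which already has the shape of the stated formula up to normalization.

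The main obstacle I anticipate is justifying the $\Phi_k/A_k$ weighting in the statement rather than a plain $\Phi_k$. This requires being explicit about the averaging convention adopted for $\bar{D}_c$: the factor $1/A_k$ is what appears if the cell-level mean is written as an aggregation over serving areas of the station-local mean traffic densities $\Phi_k/A_k$ multiplied by the per-flow delay $D_k+D_{BL,j,k}$, as opposed to a pure $\phi$-weighted conditional expectation on $\{s\in\mathcal{A}_c\}$ (which would normalize by $\sum_{k'\in\mathcal{K}_c}\Phi_{k'}$). Once that convention is pinned down, the remainder is a mechanical substitution of the access delay lemma, the corollary on $D_k$, and the backhaul delay lemma into the location-wise decomposition above; no additional probabilistic machinery is needed beyond linearity of expectation and the partition of $\mathcal{A}_c$ into the serving areas.
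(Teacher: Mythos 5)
The paper states this proposition without any proof, so there is no internal argument to compare yours against; your derivation is the natural way to assemble the preceding lemmas, and its substance is sound. The series decomposition $D_k(s)+D_{BL,j,k}$ for a flow originating at $s\in\mathcal{A}_k$, the use of the convention $D_{BL,j,j}=0$ to treat eNB-served and RN-served users uniformly, the partition of $\mathcal{A}_c$ into the serving areas $\mathcal{A}_k$, and the collapse of $\int_{\mathcal{A}_k}\phi(s)D_k(s)\,ds$ to $\Phi_k D_k$ via the corollary are all exactly the right steps, and they are the only probabilistic content of the statement.

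The normalization issue you flag is real and is the one point where your proof cannot close the gap by derivation alone. A flow-level average (conditioning on arrival in cell $c$) gives $\sum_k \Phi_k (D_k+D_{BL,j,k})\big/\sum_{k'}\Phi_{k'}$, while the per-unit-area convention the paper uses for the energy metric $\Pi$ in (\ref{eq:ee}) would give $\frac{1}{A_c}\sum_k \Phi_k (D_k+D_{BL,j,k})$; neither produces the weight $\Phi_k/A_k$. Indeed, for uniform traffic one has $\Phi_k=A_k$ and the stated formula degenerates to the unnormalized sum $\sum_{k\in\mathcal{K}_c}(D_k+D_{BL,j,k})$, which grows with the number of stations and is not an average in any probabilistic sense. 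So the factor $\Phi_k/A_k$ must be read as the paper's definitional choice for $\bar{D}_c$ (or plausibly a misprint of $A_c$ as $A_k$), not as something derivable from the traffic model; your honest identification of this, together with the correct derivation of $\sum_k\Phi_k(D_k+D_{BL,j,k})$ up to the choice of normalization, is as much as can legitimately be proved here.
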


\subsection{Energy Consumption per Transmitted Bit}
Let $\epsilon(s)$ be the energy consumed by a UE located in $s\in \mathcal{A}_{k}$ for transmitting one bit. This metric is sometimes called Energy Consumption Rating in the literature \cite{Suarez12, Han11}. We have:
\begin{eqnarray}
  \epsilon(s)&=&\mathbb{E}_{\tau} \left[ \frac{T(s)}{C(\gamma_k(s,\tau))}\middle | f_k(\tau)=s \right] \notag \\
  &=&T(s)\displaystyle \int_{0}^{\infty} \frac{\pi_{\gamma,k}(s,z)}{C(\gamma_k(s,z))}{dz},s\in \mathcal{A}_k
\end{eqnarray}
which can be also expressed as a function of $D_k(s)$:
$\epsilon(s)=T(s) D_k(s)(1-\rho_k)(1-\beta), s\in \mathcal{A}_k$.
The {\bf average uplink energy consumption per bit} $\Pi$ associated to UEs in cell $c$ is given by:
\begin{eqnarray} \label{eq:ee}
 \Pi &=&\frac{1}{A_c}\displaystyle \int_{\mathcal{A}_c} \phi(s)\epsilon(s){ds} \notag \\
 &=& \frac{1-\beta}{A_c}\displaystyle \sum_{k \in \mathcal{K}_c} (1-\rho_k) \label{eq:ee}\int_{\mathcal{A}_k} T(s)D_k(s) \phi(s){ds}.
\end{eqnarray}

\section{Optimization} \label{optimization}

\def\config#1{\boldsymbol{#1}}
\def\objective{{\Pi}} 
\def\constraint{D_c} 
\def\constraintmax{D_{\mbox{{\tiny max}}}}
\def\prior{V}
\def\multiplier{\mu}
\def\augmented{{\cal F}}
\def\feasable{\tilde{\Omega}}

\def\slack{t}

\def\Real{\mathbb{R}}
\def\st{\mbox{ s.t. }} 
\def\ie{\mbox{\em i.e., }} 
\def\wrt{\mbox{\em wrt. }} 

\def\ul{\underline}
\def\ra{\rightarrow}

\def\charact{\bbbone}
\def\dsp{\displaystyle}
\def\cond{\ |\ }
\def\Ra{\Rightarrow}

\def\ra{\rightarrow}
\def\Ra{\Rightarrow}
\def\lra{\leftrightarrow}
\def\Lra{\Leftrightarrow}

\def\ie{\hspace*{-1mm}\mbox{, {\em i.e.,} }} 
\def\dsp{\displaystyle}

\def\Pr{\mbox{Pr}} 
\def\bbbp{\mathbb{P}}
\def\Pr{\bbbp}

\def\proposal{r} 
\def\acceptance{\Xi}

In this Section, we discuss the optimization of UEs uplink energy efficiency. 
The proposed algorithm aims at minimizing the UE average energy consumption per transmitted bit in (\ref{eq:ee}), by optimizing $\bar{P}$, $\mathcal{B}$ and the RN placement in the cell of interest, for a given $\bar{\omega}$, $n_{RN}$ and traffic spatial profile $\phi(s)$. Our optimization is constrained to respect a maximum tolerable value of $\bar{D}_c$ in (\ref{eq:delay}), so as to take into account the users experienced quality of service. 

RN placement is usually optimized with respect to the {\it downlink} performance of the network because of its larger traffic volume. On the contrary, we consider here the {\it uplink} performance because we aim at finding upper bounds on the possible gains that an operator can achieve in terms of energy and/or average delay on this link. Note however that with the growing traffic related to multimedia content sharing, social networks and other peer-to-peer applications, uplink and downlink traffics tend to be more balanced\footnote{A joint uplink and downlink RN placement optimization is left for further work.}. 
  
\subsection{Problem Statement}
We assume that RNs can be placed on a discrete and finite grid of candidate sites inside cell $c$\footnote{This assumption is consistent with real deployment scenarios, where normally only a certain number of locations inside the cell are available for RN installation \cite{bulakciRSP}.}, while the positions of RNs outside cell $c$ are assumed to be already set by the network operator, and not modifiable during the optimization. Position of a given RN $i$ is denoted with $z(i)$. 
Moreover, we define $\boldsymbol{\bar{P}}=[\bar{P}_{eNB} \hspace{5 pt}\bar{P}_{RN}]$.
Target received powers $\bar{P}_{eNB}$ and $\bar{P}_{RN}$ can vary between a minimum and a maximum value, denoted with $\bar{P}_{\min}$ and $\bar{P}_{\max}$ respectively. Similarly, we assume $\mathcal{B}_{\min} \leq  \mathcal{B} \leq \mathcal{B}_{\max}$ (see Section \ref{system_model}).
Now, the {\it configuration} $\boldsymbol{x}$ of the network is defined as the set of positions of the $n_{RN}$ RNs in cell $c$, plus the adopted $\boldsymbol{\bar{P}}$ and $\mathcal{B}$:
\begin{align}
  & \boldsymbol{x} : \left\{  z(1), \cdots ,z(n_{RN}), \boldsymbol{\bar{P}}, \mathcal{B}  \right\}, \notag \\
  & z(i)\in \mathcal{A}_{c}, \hspace{2 pt} \forall \hspace{2 pt}i \in \{1\cdots n_{RN}\}, \notag \\
  & \bar{P}_{\min}\leq \bar{P} \leq \bar{P}_{\max}, \hspace{2 pt} \mathcal{B}_{\min} \leq  \mathcal{B} \leq \mathcal{B}_{\max}.
\end{align}
We name {\it configuration space} the set of all configurations, and denote it with $\Omega$.

Our problem is to minimize energy consumption (\ref{eq:ee}) under delay constraint (\ref{eq:delay}):
\begin{equation}
\min_{\config{x}} \objective(\config{x}) \label{objective:eq} 
\end{equation}
\begin{equation}
\mbox{s.t. } \bar{D}_c(\config{x}) \leq \constraintmax,\; \config{x} \in \Omega, \constraintmax \in \mathbb{R}^+, \label{constraint:eq}
\end{equation}
where constraint (\ref{constraint:eq}) restricts the domain of feasible solutions to the subspace: 
\begin{equation}
\feasable = \{ \config{x} \in \Omega \st \constraint(\config{x}) \leq \constraintmax \} 
\end{equation}
Recall that the computation of the station loads via the fixed point iteration of Section~\ref{subsec:accessload} is required for the evaluation of the delay.

Now, the problem (\ref{objective:eq}) is in general non-convex, and the cardinality of $\Omega$, especially for high $n_{RN}$, makes it intractable via exhaustive search. Hence, we rely on stochastic optimization algorithms, and propose a customized version of the well-known Simulated Annealing algorithm to solve (\ref{objective:eq}). In the following, we first briefly recall the generic SA, and then introduce our version.

\subsubsection{Generic SA Algorithm} 
The SA is a metaheuristic aimed at solving large non-convex problems, which has been first proposed by Metropolis \cite{Metropolis1953} and then applied on a wide range of optimization problems (see, e.g., \cite{Kirkpatrick1983,Keung2010}). The literature shows that the SA is an effective algorithm, if its parameters are appropriately tuned (see, e.g., \cite{Roubini99}). Let $\augmented(\config{x})$ denote the {\it energy}\footnote{Not to be confused with the energy in J involved in the energy consumption (\ref{eq:ee}).} associated with configuration $\config{x}$, and consider the problem of minimizing $\augmented(\config{x})$ over the configuration space $\Omega$. 
The SA explores only a subset $\Psi \subset \Omega$ of the configuration space, where usually $|\Psi| \ll |\Omega|$, and is able to find the optimal configuration by means of an {\it appropriate} selection of the analyzed configurations. At temperature $T$, the algorithm proceeds by assigning to each configuration $\config{x}$ an exponential probability, given by:
\begin{equation} \label{eq:probaexp}
     \mathbb{P}_T(\config{x})= \frac{ \exp(-\frac{\augmented(\config{x})}{T})}{Z_T}, \forall \config{x} \in \Omega,
\end{equation} 
where $Z_T$ is a normalizing constant. 
Hence, the solution is the $\config{x}$ that maximizes $\mathbb{P}_T(\config{x})$. According to the Metropolis-Hastings variant \cite{Hastings1970} of the SA, the set $\Psi$ of configurations to be analyzed is determined according to the following procedure: 
\begin{itemize}
	\item At step $m=0$ a configuration $\config{x}(0) \in \Omega$ is arbitrarily selected and designated as {\it current solution} $\config{x}^*(0)$ for step $0$: $\config{x}^*(0) \leftarrow \config{x}(0) \in \Omega$.
	\item At any step $m \geq 0$: assume $\config{x}^*(m-1)$ is the current solution for step $m-1$.
The SA will first draw a new {\it candidate} solution $\config{x}'(m) \in \Omega$ for step $m$, which complies with the given {\it proposal law} $\proposal(\config{x}^*(m-1) \rightarrow \config{x}'(m))$ (defined by the algorithm designer). 
Then, assuming that $r$ is symmetric, $\config{x}'(m)$ will be accepted as current solution for step $m$ with probability:  
\begin{eqnarray}
  \lefteqn{\mathbb{P}_{T_m}(\config{x}^*(m) \leftarrow \config{x}'(m))=} \notag \\ 
  &&\min\left(1,\ \exp \left( - \dfrac{\augmented(\config{x}'(m)) - \augmented(\config{x}^*(m-1))}{T_m} \right)\right),  \notag
\end{eqnarray}
where $T_m$ is a parameter called {\it temperature}, which decreases to zero {\it slowly enough}  as $m\rightarrow+\infty$, and is such that $T_m \geq \frac{T_0}{1 + \log (m+1)}$.
If $\config{x}'(m)$ is not accepted, then $\config{x}^*(m) \leftarrow \config{x}^*(m-1)$. In most practical implementations of the SA, the temperature is updated following a law of the kind $T_m = T_0h^m, h<1$, where $h$ is {\it close} to 1.
\end{itemize}

\subsubsection{Proposed Exterior Search Approach for SA}
We seek for a minimizer of $\objective( \config{x})$ over $\tilde{\Omega}$, where the cardinality of the feasible configurations space  depends on the value of the constraint: $|\tilde{\Omega}|=f(D_{\max})$. In particular, $f$ is an increasing function of $D_{\max}$, because the number of feasible configurations increases as the constraint loosens. 

When $D_{\max}$ is small, we can reasonably expect to have $|\tilde{\Omega}|\ll |\Omega|$. This may reduce the effectiveness of the SA, as the energy surface that the algorithm explores could be fragmented into many isolated regions, some of which are unreachable for the algorithm. Moreover, we can expect the optimal solution to tightly respect the constraint, i.e., to lie close to the border of $\tilde{\Omega}$ \cite{Siedlecki89}. Thus, the standard SA may not appropriately cover the region where the solution is, as many of its configurations are non feasible.   

This problem can be solved by means of an {\it exterior search} approach \cite{Schwefel95,kim-course}. It consists in extending the search to configurations outside $\feasable$, while adding a {\it penalty} to the energy of those configurations that violate the constraint. This method presents analogies with the Lagrangian relaxation method (see the classic works \cite{HK70,Fisher73}), and it is regarded as a powerful instrument to deal with constrained optimization problems (see, e.g., \cite{Michalewicz95,Coit96}). This idea is illustrated in Fig.~\ref{ext-penalty:fig}, where we see on a fictitious example how exterior search can reduce the path to the optimal configuration. 
\begin{figure}[t]
\vspace{5mm}
\input{./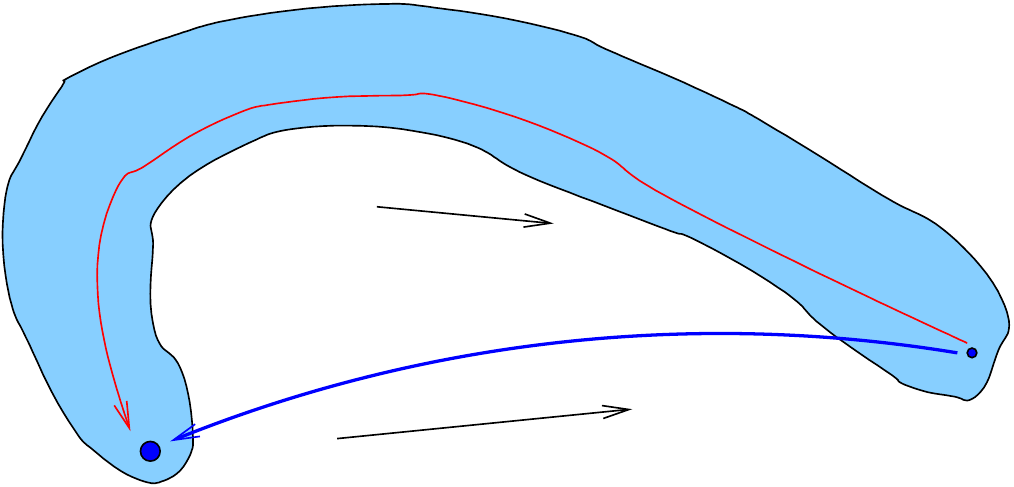_t}
\caption{\label{ext-penalty:fig}
Exterior search principle: the search from the initial configuration $\config{x}_0$ to the optimal configuration $\config{x}^*$ is extended to configurations outside $\tilde{\Omega}$ (in red: interior penalty, in blue: exterior penalty).
}
\end{figure}

We thus adopt an exterior search approach, and reformulate (\ref{objective:eq}) as:
\begin{equation}
\min_{\config{x}}  \augmented(\config{x}),  \config{x} \in \Omega, \label{eq:optim_reformulated}
\end{equation}
\begin{equation}
\mbox{where }\augmented(\config{x}) = \objective(\config{x}) 
+  \mathbb{1}_{D_c(\config{x})>D_{\max}}V(\config{x}, D_{\max}), \notag
\end{equation}
and where $\prior(\config{x})$ is a function $\Omega \mapsto \Real$ accounting for constraint (\ref{constraint:eq}) and named {\it penalty function}, which plays a crucial role in the optimization \cite{kim-course}.

Penalty functions are classified into {\it static}, {\it dynamic}, and {\it adaptive}. A static penalty function does not change during the optimization. It can be represented, e.g., by a fixed constant added to the energy of those configurations $\config{x}\not \in \tilde{\Omega}$, or by a function proportional to the Euclidean distance of the considered configuration to the feasible region \cite{Richardson90}. On the contrary, dynamic penalty functions can be adjusted in accordance to the progress of the optimization. A common approach is to gradually increase the penalty with the number of explored configurations \cite{Joines94}, so as to guarantee the convergence of the optimization towards a feasible solution. Finally, an adaptive penalty function (see, e.g., \cite{Coit96,Bean92}) considers further aspects of the search, such as steering the algorithm towards regions of the energy surface which are deemed promising for the search.    

We propose the following exterior penalty function, which is dynamic and adaptive at the same time: 
\begin{equation}\label{eq:adoptedpen}
    V(\config{x}, D_{\max}) = \alpha (m-1)\frac{ \objective(\config{x})}{D_{\max}}(D_c(\config{x})-D_{\max}),
\end{equation}
where $\alpha$ is a constant and $m$ is the SA step.
The benefits of using function (\ref{eq:adoptedpen}) are manifold. First, the penalty is proportional to the violation of constraint $D_{\max}$, favoring the exploration of configurations which are out of $\tilde{\Omega}$ but {\it close} to it, while penalizing more those which are {\it far}. At the same time the penalty is independent of the adopted value of $D_{\max}$, depending rather on the percentage of exceeding delay, with respect to $D_{\max}$. This represents an important feature when dealing with constraints of a different order of magnitude, compared to $\objective(\config{x})$. Moreover, the penalty is proportional to $\objective(\config{x})$, so as to favor the exploration of configurations outside $\tilde{\Omega}$ when they appear to be {\it promising}, i.e., their energy is sensibly lower to that of the current solution (hence the adaptive nature of (\ref{eq:adoptedpen})). Finally, acting on $\alpha$ we can modify the percentage of acceptance for configurations outside $\tilde{\Omega}$. We use here the works of Geman and Robini \cite{geman-graffigne, robini-bresler-magnin}: Choosing $\alpha(m-1) > \log(m)$ is sufficient to guarantee convergence of the SA algorithm towards a feasible minimizer. App.~\ref{app:gibbs} explains why the convergence property of the SA is maintained with such a penalty function. 

Surprisingly, stochastic constrained optimization based on exterior penalties has been seldom employed in wireless networks optimization. There are however three recent interesting references using this method \cite{Yang11, Saifullah14, Azim12}. In this context, two important problems arise:
\begin{enumerate}
	\item The generic choice of the penalty term itself. To our knowledge, the most interesting analysis appears in \cite{Azim12}, using the so-called logarithmic barrier. Our approach is somewhat similar: since our penalty term is the product of the objective function $\Pi(\config{x})$ and the constraint $\constraint(\config{x}) - \constraintmax$, its gradient is a linear combination of the gradients of both previous expressions. Thus, in a deterministic, continuous framework, during a phase when the penalty is active ($\constraint(\config{x}) > \constraintmax$) it adapts the minimizing direction search to both the objective function and the constraint  opposite gradients (see Fig. \ref{ext-penalty:fig}). We expect the stochastic, discrete SA to behave in a somewhat similar way.
	\item The selection  of the Lagrangian multipliers and their evolution during the algorithm. Apart from the usual choice of constant parameters, an interesting development can be found in \cite{Yang11}: at each step of the SA algorithm, each penalty weight itself either increases if the associated constraint is violated, or decreases (in a geometric way) if this constraint has been satisfied during several previous steps. In \cite{Saifullah14} all penalty weights increase regularly.
\end{enumerate}

To the best of our knowledge, our approach is the first one to use stochastic constrained optimization for station placement in wireless networks. The originality of our work also lies in the multiplicative penalty function and the theoretical choice of the penalty coefficient $\alpha(m-1)$.


\subsubsection{Initial Temperature Choice} 
Following \cite{relplac}, we address the problem of finding a good value for the initial temperature $T_0$ in each considered optimization problem via dichotomic search during a series of preliminary runs of the algorithm.   


\section{Results}\label{results}

\subsection{Considered Scenario and SA Parameters}
The proposed theoretical framework is here applied on a test scenario drawn according to \cite[case 1]{3gpprelayperf} for what concerns propagation, shadowing and stations transmit power. System bandwidth is $10$~MHz. The network is composed by one central eNB surrounded by 6 eNBs regularly distributed around it on a circle of ray $500$~m. We optimize RN placements in the central cell only, while RNs in the surrounding cells are assumed to be regularly distributed on a circle of radius 160 m around their donor eNB. All stations have omnidirectional antennas. The same realization of shadowing (drawn according to \cite{3gpprelayperf}) is used for all simulations, so as to ensure that results be comparable. If not otherwise mentioned, we adopt a {\it uniform} traffic density spatial distribution, i.e., $\phi(s)$ is constant $\, \forall \, s {\in} \mathcal{A}_{nw}$, in order to draw general conclusions about energy efficiency. In Section~\ref{non_unif} we show the performance under a {\it non-uniform} traffic distribution. We set $\beta{=}0.1$. 

Function $C$ is approximated by means of the Modulation and Coding Scheme (MCS) indicated in \cite{36.942}, so as to take into account the effect of an upper-bounded capacity function, which is the typical case in real deployments. Also, fast fading on the backhaul link is not considered. This choice is justified by the assumption that RNs do not move, and optimization of their positions is performed on a long-term basis (see, e.g. \cite{salehFF}). 
We have assumed $\mathcal{B}_{\min}{=}0$ and $\mathcal{B}_{\max}{=}15 \hspace{2 pt}$~dB, with a step of 1 dB. The grid of candidate RN sites has a step of $50$~m.
    

The fixed point of $\boldsymbol{F}(\boldsymbol{\rho})$ is found by iteratively computing stations loads. At iteration $t$, probabilities $\pi_{\gamma,k}(z,s)$ are computed according to $\boldsymbol{\rho}(t-1)$, obtained at iteration $t-1$, and then fed into (\ref{eq:rho_logn}) to get $\boldsymbol{\rho}(t){=}\boldsymbol{F}(\boldsymbol{\rho}(t-1))$, starting from $\boldsymbol{\rho}(0)=\boldsymbol{0}$. Iterations are stopped when either $|\rho_k(t)-\rho_k(t-1)|{<}0.01, \forall k \in \mathcal{K}$, or $\exists  k : \rho_k(t)\geq 1, k \in \mathcal{K}$. In the latter case, the analyzed configuration is labeled as non-valid.  
Similarly, fixed point iterations are used to compute the loads on the backhaul link, after $\boldsymbol{\rho}$ is determined. We were not able to analytically prove the convergence of the fixed point iteration (by showing for example that $\boldsymbol{F}$ is a contraction mapping). We have however numerically observed that the iteration always converges in less than 10 iterations.

SA algorithm is implemented as described in Section \ref{optimization}. Once again, we emphasize the need to optimize network parameters and in particular RN locations in order to obtain upper bounds on the achievable performance. The algorithm yields a solution after 45 temperature steps. For each tested temperature value, the energy of a certain number of network configurations is calculated. This number varies according to $n_{RN}$, e.g., for $n_{RN}=2$, 400 network configurations are tested at each temperature step. 
For every optimization, the SA is repeated 4 times, and the best solution among the 4 obtained solutions is elected as a final result. The configuration corresponding to the final result is denoted with $\boldsymbol{x}^*$. 

\subsection{Numerical Results}
We denote with $\boldsymbol{x}^*_{0}$ the configuration with minimum energy when no RNs are deployed in the whole network, and with $\Pi_0(\boldsymbol{x}^*_{0})$ its corresponding energy. 
Unless otherwise mentioned, results are expressed in terms of the energy consumption ratio $\Pi (\boldsymbol{x}^*) / \Pi_0(\boldsymbol{x}^*_{0})$, so as to show the energy gain (or loss) resulting from RN deployment, and they are plotted against the normalized constraint $D_{\max}/D_0$, where $D_0$ is the value of $\bar{D}_c$ corresponding to $\boldsymbol{x}^*_{0}$, i.e., the average delay when no RNs are deployed. 
Hence, a point in the region $\Pi(\boldsymbol{x}^*) / \Pi_0(\boldsymbol{x}^*_{0}){<}1$ corresponds to an energy consumption per bit gain with respect to the network with no RNs, whereas a point in the region $D_{\max}/D_0{<}1$ corresponds to an average uplink transmission delay gain. 
All the curves that we obtained exhibit a hyperbolic-like shape, which follows from the nature of our constrained optimization. When $D_{\max}$ is large, $| \tilde{\Omega}| \approx |\Omega|$, the delay doesn't play any role in the optimization and so the energy consumption gain reaches its maximum. 
Instead, if the constraint is tight we have $| \tilde{\Omega}| \ll |\Omega|$, and it is unlikely that any configuration that performs {\it well} in terms of energy efficiency lies in $\tilde{\Omega}$. 

\subsubsection{Energy - Delay Trade-off}
Fig.~\ref{fig:uni_01} shows the trade-off between energy consumption and delay, for a varying number of RNs. Note how RN deployment can bring consistent UEs energy savings.
Let consider the region where $D_{\max}/D_0{>}1$. We observe that the energy consumption gain increases with $n_{RN}$. 
This is an expected result, as RNs reduce the average distance between UEs and their serving station. The corresponding increase in uplink interference is mitigated by a lower users transmit power. On the contrary, a lower number of RNs performs better when $D_{\max}/D_0<0.5$. This behavior is related to the cardinality $|\tilde{\Omega}|$ of the valid configurations set, which tends to decrease with $D_{\max}/D_0$, narrowing the space for network optimization and reducing the achievable energy consumption gains. In this case adding RNs is detrimental for energy efficiency, as more UEs served by RNs imply a higher backhaul delay, that further reduces $|\tilde{\Omega}|$.  

\subsubsection{Effect of Offered Traffic}
We analyze the effect of $\bar{\omega}$ on energy consumption in Fig.~\ref{fig:load_final}, where $\Pi(\boldsymbol{x}^*)$ and $D_{\max}$ are scaled with the constants $\Pi_0(\boldsymbol{x}_0^*(\bar{\omega}=5))$ and $D_{0}(\bar{\omega}=5)$ resp., obtained in a network with no RN and a traffic density $\bar{\omega}=5$~bits/s/m$^2$. This choice has been made in order to keep the same scaling constants for all curves. 
As we can see, the deployment of 1 RN is sufficient to reduce the average energy consumption of UEs to less than a half, compared to the eNBs-only network.
Now, let focus on the line $D_{\max}/ D_0(\bar{\omega}=5){=}1$ (same delay as in the eNB-only case). We note that adding a single RN can help the terminals to be more energy efficient without touching the quality of service and even if the load increases to $\bar{\omega}=20$~bits/s/m$^2$. Nonetheless, beyond approximately $\bar{\omega}=30$~bits/s/m$^2$, delay and interference negatively affect user performance. 


%
\begin{figure}
\centering
\begin{minipage}{.47\textwidth}
\centering
\includegraphics[width=\linewidth]{./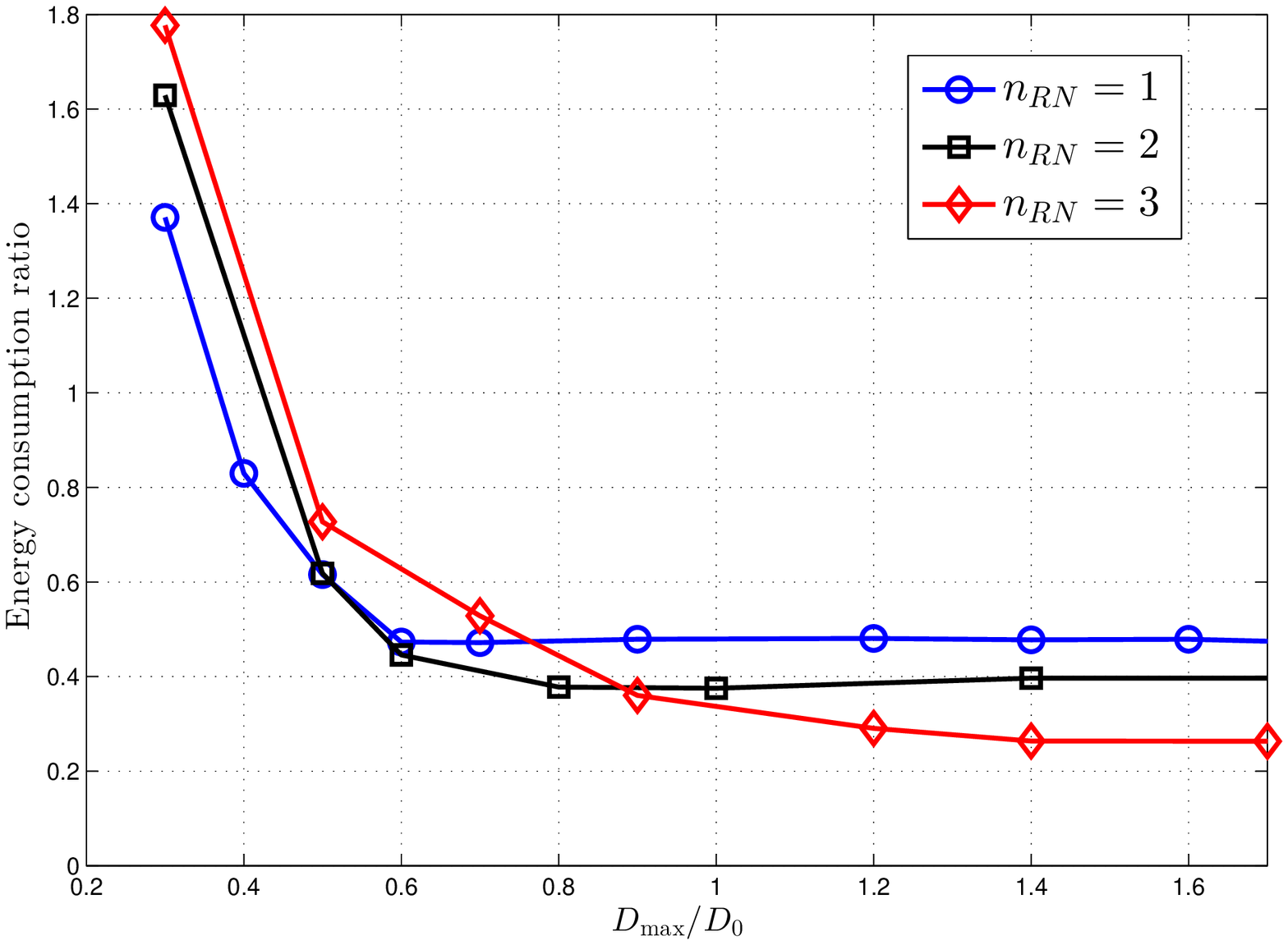} 
\caption{Energy consumption ratio $\Pi (\boldsymbol{x}^*) / \Pi_0(\boldsymbol{x}^*_{0})$ vs delay $D_{\max}/D_0$ ($\bar{\omega}=5$~bits/s/m$^2$).}
\label{fig:uni_01}
\end{minipage}\hfill
\begin{minipage}{.47\textwidth}
\centering
\includegraphics[width=\linewidth]{./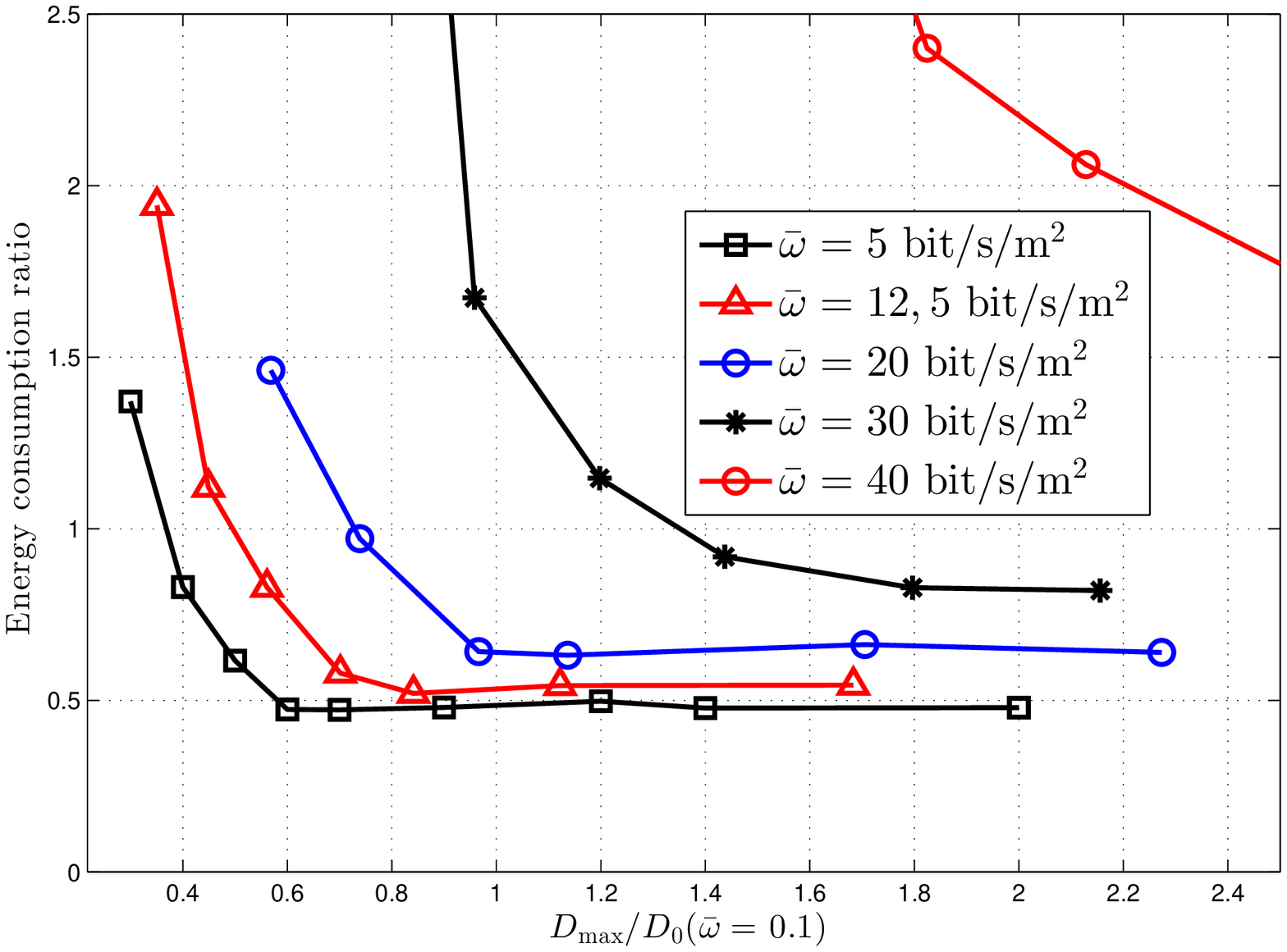}
\caption{Energy consumption ratio $\Pi (\boldsymbol{x}^*) /\Pi_0(\boldsymbol{x}_0^*(\bar{\omega}=5))$ vs delay $D_{\max}/ D_0(\bar{\omega}=5)$ for varying $\bar{\omega}$ ($n_{RN}=1$).}
\label{fig:load_final}
\end{minipage}
\end{figure}

\subsubsection{Energy Efficiency of RNs vs Small Cells}
Fig.~\ref{fig:back_effect} compares the results obtained with our system model, with those obtained using small cells instead of RNs (i.e., $\beta=0,\, D_{BL,j,k}=0,\, \forall j \in \mathcal{K}_{B},\, \forall k \in \mathcal{K}_{R}$), highlighting the difference of our work with respect to those dedicated to devices with wired (or ideal) backhaul. As expected, RNs allow for a smaller energy consumption gain compared to small cells, for a fixed $D_{\max}$. This is due to the transmission delay on the backhaul link, to the increased delay on the access link (due to $\beta<1$) and to the constraints on RN placement related to backhaul path-loss and shadowing. Performance of RNs is more impaired when the traffic density or $n_{RN}$ increase. 
However, RN deployment still yields consistent uplink energy consumption gains.     

\subsubsection{Effectiveness of Exterior Search with Penalty Function}
Fig.~\ref{fig:compar_ext} compares the results of the optimization using penalty function and exterior search, with those obtained by means of an interior search. Both the interior and the exterior search have been carried on with the same number of iterations, for an unbiased comparison. The exterior search proves to be more effective when the delay constraint is tight, as $\tilde{\Omega}$ is expected to be fragmented and $|\tilde{\Omega}| \ll |\Omega |$. No meaningful gain in terms of search effectiveness is observed when the constraint is loose, as in this case the two approaches tend to coincide.
\begin{figure}
\centering
\begin{minipage}{.47\textwidth}
\centering
\includegraphics[width=\linewidth]{./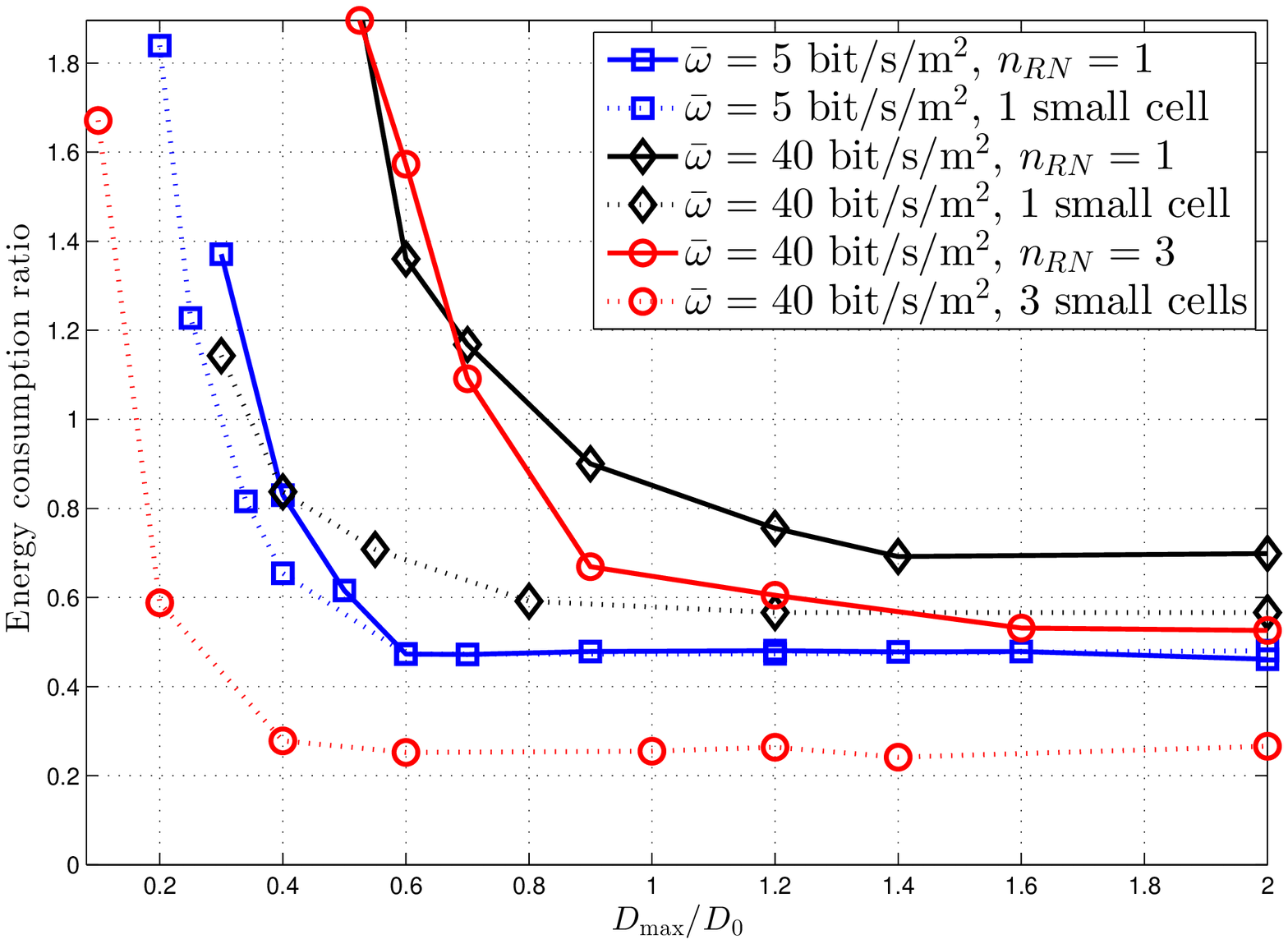} 
\caption{Comparison between wireless RNs and wired small cells, for varying $\bar{\omega}$ and $n_{RN}$ ($\beta=0.1$).}
\label{fig:back_effect}
\end{minipage}\hfill
\begin{minipage}{.47\textwidth}
\centering
\includegraphics[width=\linewidth]{./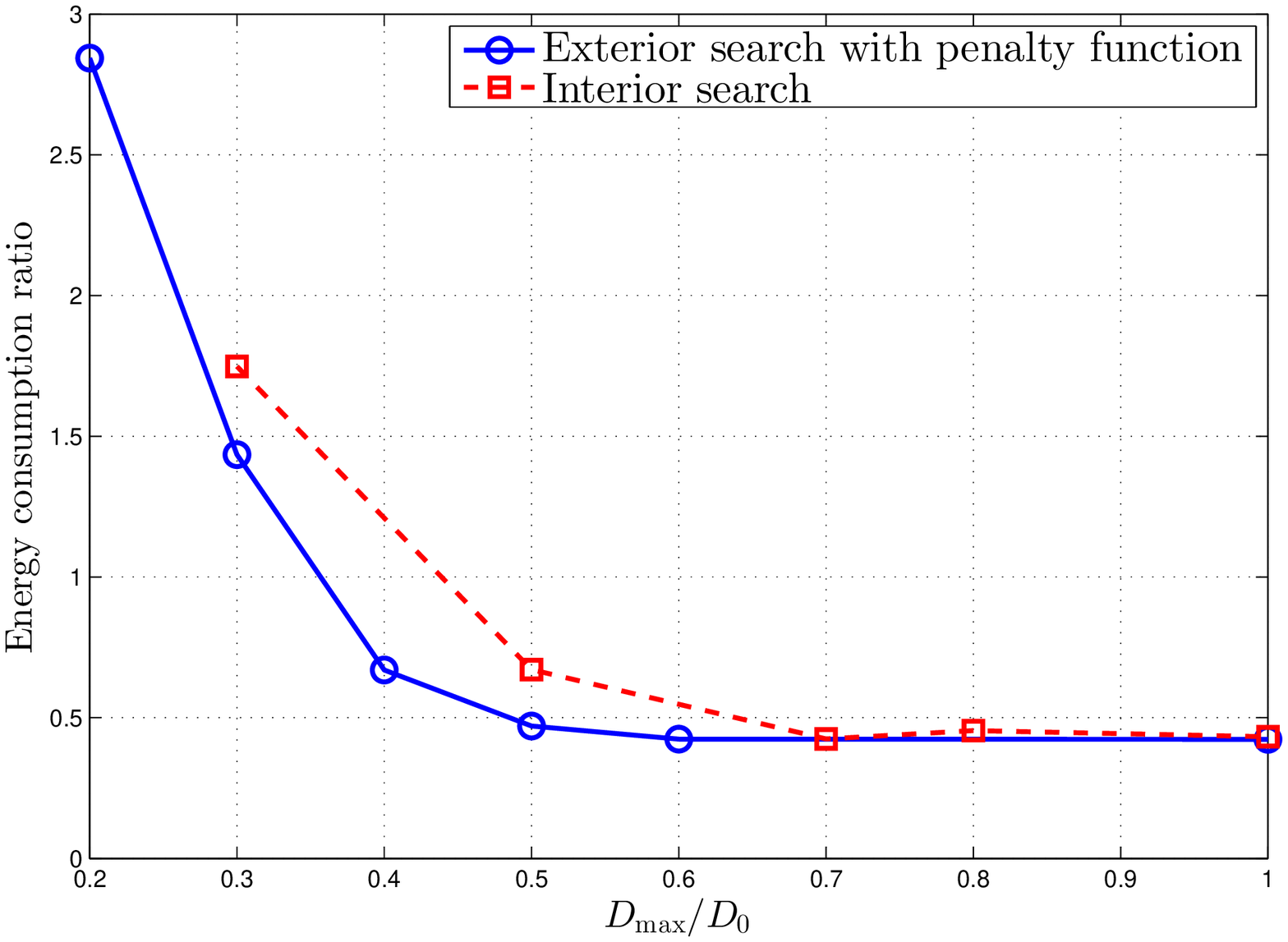}
\caption{Comparison between exterior and interior search approach effectiveness ($\bar{\omega}=5$~bits/s/m$^2$, $\beta=0.1$, $n_{RN}=1$).}
\label{fig:compar_ext}
\end{minipage}
\end{figure}

\subsubsection{Effect of Offered Traffic Spatial Distribution}\label{non_unif}
Fig.~\ref{fig:compar_gauss} compares the results obtained using a uniform $\phi(s)$ with those obtained when $\phi(s)$ is the sum of a uniform traffic profile and a bi-dimensional gaussian function, centered at $s:\{x=115 ; y=143 \}$~m, with the same average traffic density. We can see that RNs allow for larger energy consumption gains when traffic is non-uniform. This is due to the flexibility of the RN solution that allows relays to be placed close to the hot-spot center. 
\begin{figure}[]
 \centering
\includegraphics[width=\linewidth]{./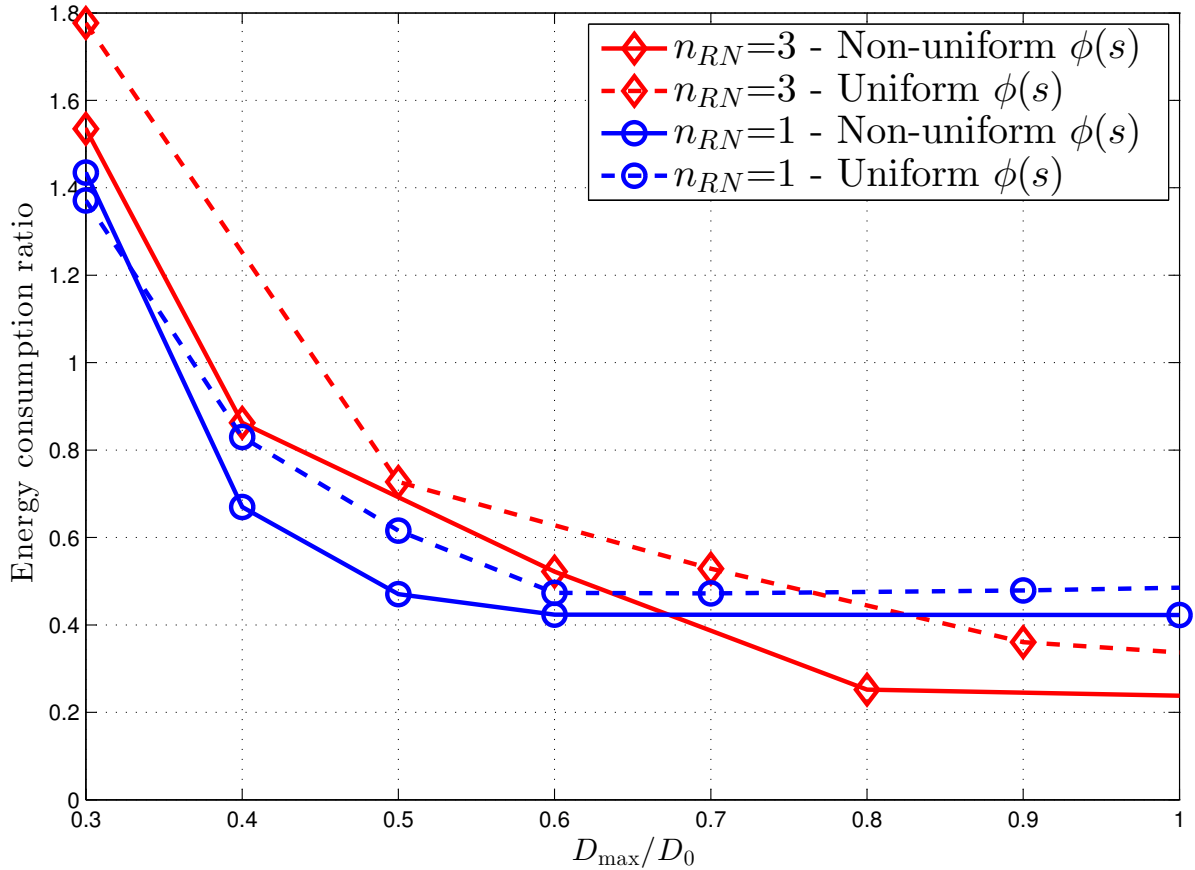} 
 \caption{Energy consumption vs delay tradeoff: $\bar{\omega}=0.1$, $\beta=0.1$, comparison of uniform and non-uniform traffic profile $\phi(s)$ for varying $n_{RN}$.}
\label{fig:compar_gauss} 
\end{figure} 



\section{Conclusion}\label{conclusions}

We have proposed a comprehensive framework for the optimization and performance evaluation of uplink energy consumption per bit in relays-enhanced cellular networks. This framework considers the arrival and departure of users, and the loads of network stations. Moreover, shadowing and fast fading are both taken into account in the propagation model. A realistic radio resource scheduling scheme is assumed, and its impact on users performance is derived. A customized optimization algorithm based on exterior search with penalty functions is proposed for the optimization, which is carried on under quality of service constraints. Results show a meaningful boosting of users energy efficiency given by the deployment of RNs, even if the average flow delay is imposed to be the same as in the network without relays. Proposed exterior search approach is shown to be more effective than the traditional interior search. 

\appendices

\section{Derivation of $\mu_{\gamma,k}$ and $\sigma_{\gamma,k}$}\label{appendix:momentmatch}

We use here a simple and classical approach consisting in introducing a auxiliary Random Variable (RV). (\ref{eq:sinrdef}) can indeed be rewritten as: $\gamma_k(s)=\frac{P_{k}(s)x}{\sum_{j \in \mathcal{K}, j\neq k} \Theta_j P_{k}(f_j)x +Nx}=P_k(s)xR_k$, where $x$ is a realization of a lognormal RV $X$, and $R_k^{-1}$ is the denominator. Note that $R_k$ does not depend on $s$. Several authors have shown that the interference term $I_k=\sum_{j\neq k} \Theta_j  P_{k}(f_j)$ can be well approximated by a lognormal distribution $\text{Log}{-}\mathcal{N}(\mu_{I_k},\sigma_{I_k}^2)$ even when the number of interferers is variable (see e.g. \cite{kostic, fischione}). In \cite{kostic}, it is also shown that the product of an exponential RV ($P_k(s)$) and a lognormal RV ($X$) as well as the sum of two lognormal RVs can be again approximated by a lognormal RV. Parameters $(\mu_{\gamma,k},\sigma_{\gamma,k})$ are then obtained by moment matching. The derivation of $(\mu_{I_k},\sigma_{I_k})$, which is less standard, is now detailed.
The mean $M_{1,I_k}$ of $I_k=\sum_{j\neq k} \Theta_j P_{k}(f_j)$ is equal to:
\begin{eqnarray} 
M_{1,I_k}&=&\mathbb{E}_{\tau}\left[ \sum_{j\neq k} \Theta_j P_{k}(f_j) \right] \notag \\
&=&\sum_{j\neq k} \rho_j \mathbb{E}_{\tau}\left[ P_k(f_j)\right] \notag \\
&\overset{(1)}{=}&\sum_{j\neq k} \rho_j \int_{\mathcal{A}_j} \frac{\varrho_j(s)}{\rho_j}T(s)G_k(s){ds} \notag \\
&=& \sum_{j\neq k} \rho_j Y_{j,k}, \notag
\end{eqnarray}
where $Y_{j,k}=\int_{\mathcal{A}_j} \frac{\varrho_j(s)}{\rho_j} T(s)G_{k}(s){ds}$. (1) is obtained by weighting the power received from location $s$ with the local load in $s$. Note that $T(s)G_{k}(s)$ are input parameters of the considered deployment scenario (propagation and transmit power assumptions). Now $\varrho_j(s)$ depends on the SINR distribution in $s$ and hence on the first moment $M_{1,I_k}$ of $I_k$. In order to avoid an additional complexity to our model, we make the approximation $\varrho_j(s)/\rho_j \approx \phi(s)/\Phi_j$. This is justified by the fact that in a urban environment, it is unlikely that UEs transmit at their maximum power \cite{bulakci13} so that every user of  $j$ is received with the same average power. The approximation comes then from the expressions (\ref{eq:rhoks}), (\ref{eq:rho_logn}) of $\varrho_j(s)$ and $\rho_j$. In the same way, we derive the variance $M_{2,I_k}$ of $I_k$: $M_{2,I_k} 
= \sum_{j\neq k} 2\rho_j H_{j,k} -\rho_j^2 Y_{j,k}^2$,
where $H_{j,k}=\int_{\mathcal{A}_j} \frac{\varrho_j(s)}{\rho_j} T^2(s)G_{k}^{2}(s){ds}$. Again, we approximate $\varrho(s)/\rho_j \approx \phi(s)/\Phi_j$. Finally, $\mu_{I_k}$ and $\sigma_{I_k}$ are found by matching $M_{1,I_k}$ and $M_{2,I_k}$ with the mean and variance of the approximating lognormal.

\section{Proof of Theorem~\ref{th:pi}: Derivation of $\pi_{\gamma,k}(s,z)$ under MQS scheduling} \label{appendix:MQS}
The MQS scheduler orders the values of SINR $\gamma_k(s,\tau-j),j \in \{0\cdots W-1\}$ of each UE in ascending order. 
The {\it ranking} $r_i(\tau)$ of UE $i$ located in $s_i$ on RB $\tau$ is the ranking of $\gamma_k(s_i,\tau)$ in the ordered vector $\gamma_k(s_i,\tau-j),j\in \{ 0\cdots W-1 \}$. The lower the ranking, the better the SINR on $\tau$ (wrt the SINR on previous blocks). 
Station $k$ assigns RB $\tau$ to the UE with the lowest $r_i(\tau), i \in \{1 \cdots U\}$. We assume $W$ large enough so that the probability for two or more UEs to have the same ranking is negligible.  
We have: $\mathbb{{P}}(r_{i}(\tau){=}n){=}1/W, \forall n \in\{1\cdots W\}$\footnote{as in our block Rayleigh fading environment each realization of $\gamma_k(s_i,\tau)$ is independent of the others.}. 
Moreover, the probability for any UE served by $k$ to be scheduled is $\mathbb{{P}}(f_k(\tau){=}s_i)=1/U,\forall i \in\{1 \cdots U\}$ (due to the fairness in RBs allocation). 
Let $\Pi_{\gamma,k}(s_i,z)=\int_{0}^{z}p_{\gamma,k}(s_i,t){dt}$ denote the Cumulative Distribution Function (CDF) of $\gamma_k(s_i,\tau)$. We have: 
\begin{eqnarray}
\lefteqn{\mathbb{{P}}\left(r_i(\tau){=}n \hspace{4pt} \middle | \hspace{4 pt} \gamma_{k}(s_i,\tau){=}z \right)=} \notag \\
&& \binom{W-1}{n-1}\Pi_{\gamma,k}(s_i,z)^{W-n}(1-\Pi_{\gamma,k}(s_i,z))^{n-1}. 
\end{eqnarray}
We denote $Q_n(i,z)$ the right hand side, which does not depend on $U$. 
Now, we derive $\pi_{\gamma,k}(s,z)$, {\it given} $U$. Applying the Bayes theorem on $\pi_{\gamma,k}(s,z)$ we obtain:
\begin{eqnarray}
    \pi_{\gamma,k}(s,z)\!\!\!\!&=& \!\!\!\! \frac{ \mathbb{{P}}\left( f_k(\tau){=}s_i \hspace{4pt} \middle | \hspace{4 pt} \gamma_k(s_i,\tau) {=}z \right)  p_{\gamma,k}(s_i,z)  }{\mathbb{{P}}(f_k(\tau){=}s_i)} \notag \\
     \!\!\!\!&=& \!\!\!\!\mathbb{{P}}\left( f_k(\tau){=}s_i \hspace{2pt} \middle | \hspace{2 pt} \gamma_k(s_i,\tau) {=}z \right) p_{\gamma,k}(s_i,z) U. \label{eq:bigeq2}
\end{eqnarray}
We then work on $\mathbb{{P}}\left( f_k(\tau){=}s_i \hspace{4pt} \middle | \hspace{4 pt} \gamma_k(s_i,\tau) {=}z \right)$ applying the law of total probability, conditioning it with respect to the possible rankings $r_i(\tau)$:
\begin{eqnarray}
  \lefteqn{\mathbb{{P}}\left( f_k(\tau){=}s_i \hspace{4pt} \middle | \hspace{4 pt} \gamma_k(s_i,\tau) {=}z \right)=} \notag \\
&&\displaystyle \sum_{n=1}^{W} \mathbb{{P}}\left( r_i(\tau){=}n \hspace{4pt} \middle | \hspace{4 pt} \gamma_{k}(s_i,\tau){=}z \right) \times \notag \\
&&\mathbb{{P}}\left(f_k(\tau){=}s_i \hspace{4pt} \middle | \hspace{4 pt} \gamma_{k}(s_i,\tau){=}z , r_i(\tau){=}n  \right).
\end{eqnarray}
The probability of being scheduled depends on the instantaneous SINR only through the ranking: once we know $r_i(\tau)$, knowing $\gamma_{k}(s_i,\tau)$ does not add any additional information regarding the probability of being scheduled by $k$.
Hence, 
\begin{eqnarray}
\lefteqn{\mathbb{{P}}\left(f_k(\tau){=}s_i \hspace{4pt} \middle | \hspace{4 pt} \gamma_{k}(s_i,\tau){=}z , r_i(\tau){=}n  \right)=} \notag \\
&& \mathbb{{P}}\left(f_k(\tau){=}s_i \hspace{4pt} \middle | \hspace{4 pt} r_i(\tau){=}n  \right)=\left(\frac{W-n}{W}\right)^{U-1},
\end{eqnarray}
and (\ref{eq:bigeq2}) becomes
\begin{equation}
    \pi_{\gamma,k}(s,z)= p_{\gamma,k}(s_i,z) U \displaystyle \sum_{n=1}^{W} Q_n(i,z) \left(\frac{W-n}{W}\right)^{U-1}.
\end{equation}

So far, we have found $\pi_{\gamma,k}(s,z)$ {\it given} $U$.
We now define $\Psi=(W-n)/W$, and use the law of total probability, summing up $\pi_{\gamma,k}(s,z)$ {\it given} $U$ for all the possible values of $U$, and obtaining its general expression:
\begin{eqnarray}
  \lefteqn{\mathbb{{P}}\left(\gamma_k(s_i,\tau) =z\hspace{4pt} \middle | \hspace{4 pt}  f_k(\tau){=}s_i, U>0 \right)=} \notag \\ 
  &&\!\!\!p_{\gamma,k}(s_i,z)  \displaystyle \sum_{n=1}^{W} Q_n(i,z) (1-\rho_k)\sum_{U=1}^{\infty} \rho_k^{U-1}  U\Psi^{U-1},
\end{eqnarray}
where
\begin{equation}
 (1-\rho_k)\sum_{U=1}^{\infty} \rho_k^{U-1}  U\Psi^{U-1}
=  \frac{W^2(1-\rho_k)}{\left(W-\rho_k(W-n)\right)^2}.
\end{equation}
We name $T_k(W,n)\triangleq W^2(1-\rho_k)/ \left(W-\rho_k(W-n)\right)^2$ and get the conclusion.

\section{Backhaul Rate Derivation}\label{appendix:back}

We start from the definition of $R_{BL,j,k}$ as 
$R_{BL,j,k}=1/\mathbb{E}_{\tau}[ \frac{1}{C( \gamma_{BL,j,k}(\tau))}]$,
where $\gamma_{BL,j,k}(\tau)$ is given by: 
\begin{equation}\label{eq:SINR_BL}
 \gamma_{BL,j,k}(\tau)=\frac{P_{BL,j}(s_{RN,j,k})}{  \sum_{h\in \mathcal{K}_B,h\neq j} \Theta_{BL,h}(\tau) P_{BL,j}(f_{BL,h}(\tau)) +N },
\end{equation}
where $\Theta_{BL,h}$ is a binary RV indicating whether the backhaul of $h$ is active on $\tau$, i.e.,  $P(\Theta_{BL,h}=1)= \rho_{BL,h}$; $P_{BL,j}(s)$ is the power received by eNB $j$ from a RN located in $s$; $s_{RN,j,k}$ indicates the location of RN $k$ served by eNB $j$ and $f_{BL,h}(\tau)$ yields the location of the RN scheduled by $h$ for backhaul transmission on $\tau$.

During each RB $\tau$ a RN $k$ in cell $j$ can be scheduled for uplink transmission with probability 
$ p_s(j,k)= \frac{\tilde{\rho}_{BL,j,k}}{\sum_t \tilde{\rho}_{BL,j,t}}$,
where $\tilde{\rho}_{BL,j,t}=(\bar{\omega}\Phi_t/(\beta R_{BL,j,t}))$.
Let $T=|\mathcal{K}_{B}|$ denote the total number of cells in our network, and $i_j(\tau), j\in \mathcal{K}_{B}, i_j(\tau)\in \{0 \cdots n_{RN}\}$ denote the index of the RN scheduled on the backhaul of eNB $j$ on RB $\tau$, where $i_j(\tau)=0$ means that no RN has been scheduled for transmission. 
Now, assuming that backhaul scheduling decisions in a given cell do not depend on those taken in other cells, we have that the probability of having a given set $\{i_1(\tau), \cdots i_T(\tau) \}$ of scheduled RNs on RB $\tau$ is equal to 
\begin{eqnarray}
\mathbb{P}(f_{BL,1}(\tau)=s_{RN,1,i_1}, \cdots , f_{BL,T}(\tau)=s_{RN,T,i_T})&=&\notag \\
\prod_{j} p_s(j,i_j(\tau)). &&
\end{eqnarray}
There are $(n_{RN}+1)^{T}$ possible scheduling combinations in all cells. We assign an index $\delta, \delta=\{1\cdots (n_{RN}+1)^{T}\}$ to each combination, and denote $\mathcal{V}(\delta)=\prod_{j} p_s(j,i_j(\delta))$. 
Finally, we apply the law of total probability to obtain
\begin{equation}
   R_{BL,j,k}=\displaystyle \frac{1}{\sum_{\delta=1}^{(n_{RN}+1)^{T}}\mathcal{V}(\delta) \hspace{2 pt}
                        \frac{1}{C\left(\frac{P_{BL,j}(s_{RN,j,k})}{  \sum_{h\in \mathcal{K}_B,h\neq j} P_{BL,j}(s_{RN,h,i_h(\delta)}) +N }\right)}},
\end{equation}
where we assume $P_{BL,j}(s_{RN,h,0})=0$.

\section{Some properties of Gibbs distributions for penalized energies} \label{app:gibbs}

The purpose of this Appendix is to give a hint to the "good" convergence 
of a SA process when an increasing exterior penalty 
such as (\ref{eq:adoptedpen}) is added in the global energy, 
coupled with an adequate SA temperature scheme. 
Consider such an augmented energy:
\begin{eqnarray}
\augmented(\config{x})&=&U(\config{x})  + \mu\ \potential(x), \\
\potential(\config{x})&=&\charact_{\constraint(\config{x}) > \constraintmax}\dfrac{\objective(\config{x})}{\constraintmax}(\constraint(\config{x}) - \constraintmax)
\end{eqnarray}
with for instance $\multiplier = \multiplier(m) = \alpha (m-1) \geq 0$.
We say that 
$\potential$ 
is a {\bf penalty} if the set of its minimizers
\begin{equation}
\Omega^* 
= \{\ x \in \Omega \st  \potential(\config{x}) 
= \potential^*= \min_{\config{y} \in \Omega} \potential(\config{y}) \ \}
\label{ms-minimizers:eq}
\end{equation}
is exactly the feasible subspace $\tilde{\Omega}$
(this holds for 
(\ref{eq:adoptedpen})). 
Now let us in general define the following set of "iso-constrained" subspaces:
$
\epsilon \in \Real 
\mapsto
\Omega_{\epsilon} = \{ \config{x} \in \Omega \st \potential(\config{x})  = \epsilon \}
,
$
and consider the exponential distribution
given in (\ref{eq:probaexp}) 
(see also \cite{geman-graffigne}).
For any value of
$\epsilon$ 
such that
$\Omega_\epsilon \neq \emptyset$,
one has:
\begin{eqnarray}
\forall \config{x},\ \config{y} \in \Omega_{\epsilon}, \dfrac{P(X = \config{y})}{P(X = \config{x})} &=&\dfrac{\exp -\augmented(\config{y})}{\exp -\augmented(\config{x})} \notag \\
&=&\dfrac {\exp - \left[\ U(\config{y}) +\multiplier\ \potential(\config{y})\ \right]}{\exp - \left[\ U(\config{x}) + \multiplier\ \potential(\config{x}) \ \right]} \notag \\
&=&\dfrac {\exp - U(\config{y})}{\exp - U(\config{x}) }.
\end{eqnarray}
So one can safely write in such an iso-constrained subspace:
\begin{equation}
P(X = \config{x} \cond \config{x} \in \Omega_{\epsilon}) = 
\dfrac {\exp - U(\config{x}) }{ \sum_{\config{y} \in \Omega_{\epsilon}} \exp - U(\config{y})}.
\label{ms-exponential:eq}
\end{equation}
which is also an exponential distribution. 

Two key points are now at stake in view of SA purposes 
\cite{geman-graffigne,robini-bresler-magnin}:
\begin{enumerate}
\item[$\bullet$]
First,
as $\multiplier \ra +\infty$
($\multiplier$ is similar to an inverse temperature associated to the constraint 
$\potential(\config{x}) $),
the global distribution 
(\ref{eq:probaexp}) 
becomes concentrated on the subspace 
$\Omega^*$ 
defined in (\ref{ms-minimizers:eq})
while keeping the exponential form (\ref{ms-exponential:eq}).
The proof is classical:
one can re-write
(\ref{eq:probaexp}) 
as
\begin{eqnarray}
P(X = \config{x}) \!\!\!&=&\!\!\! \dfrac {\exp - \left[\ U(\config{x}) +\multiplier\ \potential(\config{x})\ \right]}{\sum_{\config{y} \in \Omega} \exp - \left[\ U(\config{y}) + \multiplier\ \potential(\config{y}) \ \right]} \notag \\
\!\!\!&=&\!\!\! \dfrac{\exp - \left[\ U(\config{x})  + \multiplier\ ( \potential(\config{x}) -\potential^*)\ \right]}{
\sum_{\config{y} \in \Omega}{ \exp - \left[\ U(\config{y}) + \multiplier\ ( \potential(\config{y}) - \potential^*)\ \right]}} \notag
\end{eqnarray}
Now, either $\potential(\config{x}) > \potential^*$ and $\exp (-\ \multiplier\ ( \potential(\config{x})  - \potential^*))\mathop{\ra 0}$ when $\multiplier \ra +\infty$, or $\potential(\config{x})  = \potential^*$ and in this case, $\exp (-\ \multiplier\ ( \potential(\config{x})  - \potential^*))= 1\; \forall \multiplier$.
\item[$\bullet$]
Then, consider 
distribution (\ref{eq:probaexp}) 
with a temperature parameter $T > 0$.
It can be written as:
\begin{equation}
P_T(X = \config{x}) = 
\dfrac{\exp - [\ \dfrac{\ U(\config{x})  + \multiplier\ \potential(\config{x})}{T}\ ]}{
\sum_{\config{y} \in \Omega}{ 
\exp - [\ \dfrac{\ U(\config{y})  + \multiplier\ \potential(\config{y})}{T}\ ]}} \notag
\end{equation}
\begin{equation} 
=\dfrac{\exp - [\ \dfrac{(U(\config{x}) - \constraintMinU)}{T}  
               + \dfrac{\multiplier}{T}\ (\potential(\config{x}) - \potential^*)\ ]}
{\sum_{\config{y} \in \Omega} 
\exp - [\ \dfrac{(U(\config{y}) - \constraintMinU)}{T}  
               + \dfrac{\multiplier}{T}\ (\potential(\config{y}) - \potential^*)\ ]
}, \notag
\end{equation}
where 
$\constraintMinU$ 
is the minimum of value of objective function $U(.)$ on $\Omega^*$.
Now, if both 
$T \ra 0^+ \mbox{ and } \multiplier \ra +\infty 
\st
\multiplier' = \dfrac{\multiplier}{T} 
\sim \log m
\mbox{ (new inverse temperature)}
$,
a similar analysis as before,
now in two steps 
shows that for the penalty case
where  $\Omega^*= \feasable$,
the 
distribution of interest
$P_T(.)$
concentrates on those configurations with 
minimal energy
$\constraintMinU$
on feasible subspace $\tilde{\Omega}$ 
(see rigorous proof in 
\cite{geman-graffigne, robini-bresler-magnin}).
\end{enumerate}

\begin{IEEEbiography}[{\includegraphics[width=1in,height=2.25in,clip,keepaspectratio]{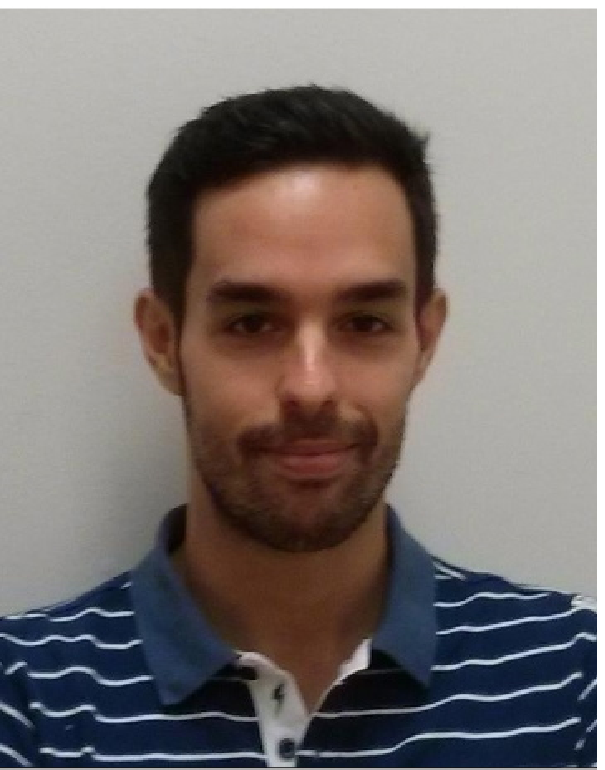}}]{Mattia Minelli} received his Master degree in telecommunications
engineering from Politecnico di Milano in Italy, in 2009. He is currently
participating in a Joint PhD Program between the School of
Electrical and Electronic Engineering at Nanyang Technological University
in Singapore and the Computer and Network Science Department
of Telecom ParisTech in Paris. His research interests include
wireless networks, propagation and radio resource management issues,
and relaying in OFDMA networks.
\end{IEEEbiography}

\begin{IEEEbiography}[{\includegraphics[width=1in,height=2.25in,clip,keepaspectratio]{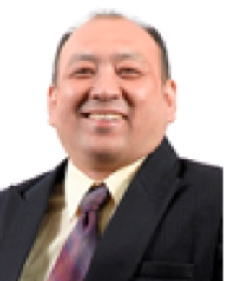}}]{Ma Maode} received his PhD degree in computer science from Hong Kong University of Science and Technology in 1999. Now, Dr. Ma is an Associate Professor in the School of Electrical and Electronic Engineering at Nanyang Technological University in Singapore. He has extensive research interests including wireless networking and network security. Dr. Ma has more than 250 international academic publications including over 100 journal papers and more than 130 conference papers. He currently serves as the Editor-in-Chief of {\it International Journal of Electronic Transport}. He also serves as an Associate Editor for other five international academic journals. Dr. Ma is an IET Fellow and a senior member of {\it IEEE Communication Society} and {\it IEEE Education Society}. He is the vice Chair of the {\it IEEE Education Society}, Singapore Chapter. He is also an {\it IEEE Communication Society Distinguished Lecturer}.
\end{IEEEbiography}

\vspace{-1.4cm}
\begin{IEEEbiography}[{\includegraphics[width=1in,height=2.25in,clip,keepaspectratio]{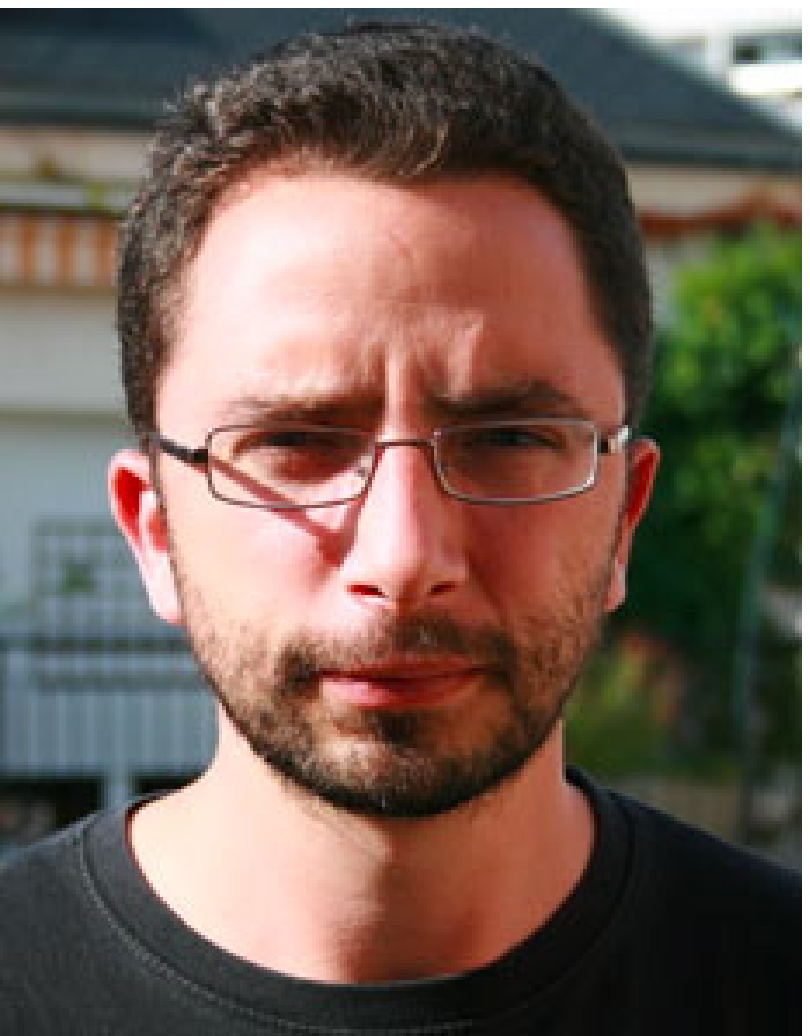}}]{Marceau Coupechoux} has been working as an Associate Professor at Telecom ParisTech since 2005. He obtained his Masters' degree from Telecom ParisTech in 1999 and from University of Stuttgart, Germany, in 2000, and his Ph.D. from Institut Eurecom, Sophia-Antipolis, France, in 2004. From 2000 to 2005, he was with Alcatel-Lucent. He was a Visiting Scientist at the Indian Institute of Science, Bangalore, India, during 2011--2012. Currently, at the Computer and Network Science department of Telecom ParisTech, he is working on cellular networks, wireless networks, ad hoc networks, cognitive networks, focusing mainly on radio resource management and performance evaluation.
\end{IEEEbiography}

\vspace{-1.1cm}
\begin{IEEEbiography}[{\includegraphics[width=1in,height=2.25in,clip,keepaspectratio]{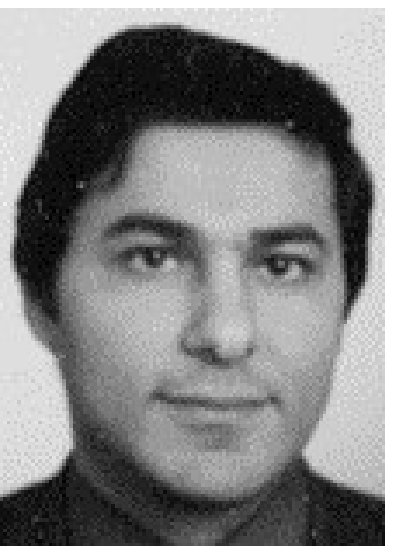}}]{Jean-Marc Kelif} received the Licence of Fundamental Physics Sciences
from the Louis Pasteur University of Strasbourg (France), and
the Engineer degree in Materials and Solid State Physics Sciences from
the University of Paris XIII in 1984. He worked
as engineer in materials, and specialized in telecommunications. Since
1993, he has been with Orange Labs in Issy-les-Moulineaux, France. His current research interests include the performance evaluation, modeling, dimensioning
and optimization of wireless telecommunication networks.
\end{IEEEbiography}

\vspace{-1.1cm}
\begin{IEEEbiography}[{\includegraphics[width=1in,height=2.25in,clip,keepaspectratio]{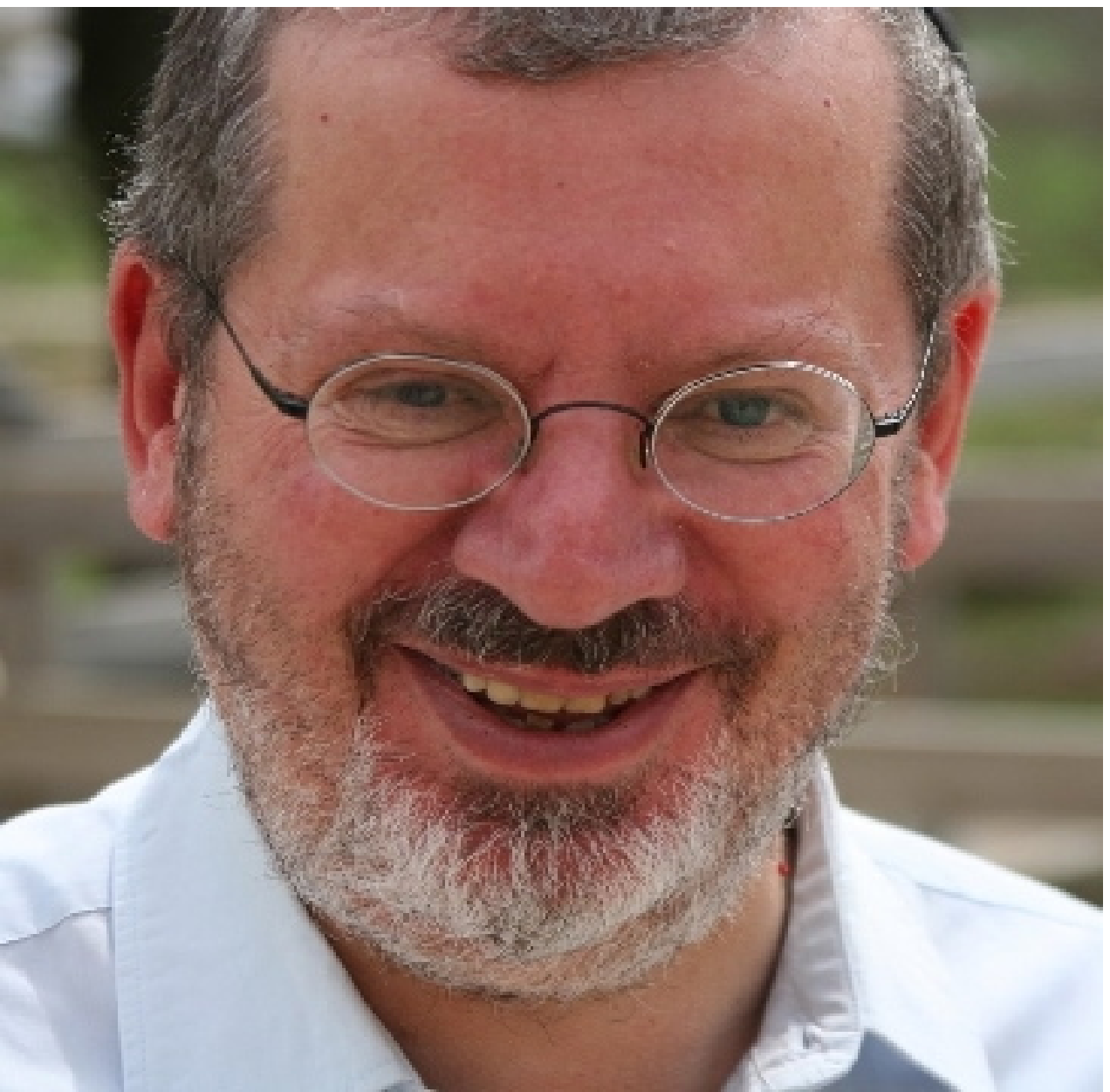}}]{Marc Sigelle} graduated as an Engineer from Ecole Polytechnique in 1975 and from Telecom ParisTech in 1977. He obtained a PhD from Telecom ParisTech in 1993. He worked first at Centre National d'Etudes des Télécommunications in physics and computer algorithms. Since 1989 he has been working at Telecom ParisTech in image processing, his main topics of interests being the restoration and segmentation of images using stochastic/statistical models and methods (Markov Random Fields, Bayesian networks, Graph Cuts in relationships with Statistical Physics). 
In the past few years he has been involved in the transfer of this knowledge to the modeling and optimization (namely, stochastic) of wireless networks. 
\end{IEEEbiography}

\vspace{-1.1cm}
\begin{IEEEbiography}[{\includegraphics[width=1in,height=2.25in,clip,keepaspectratio]{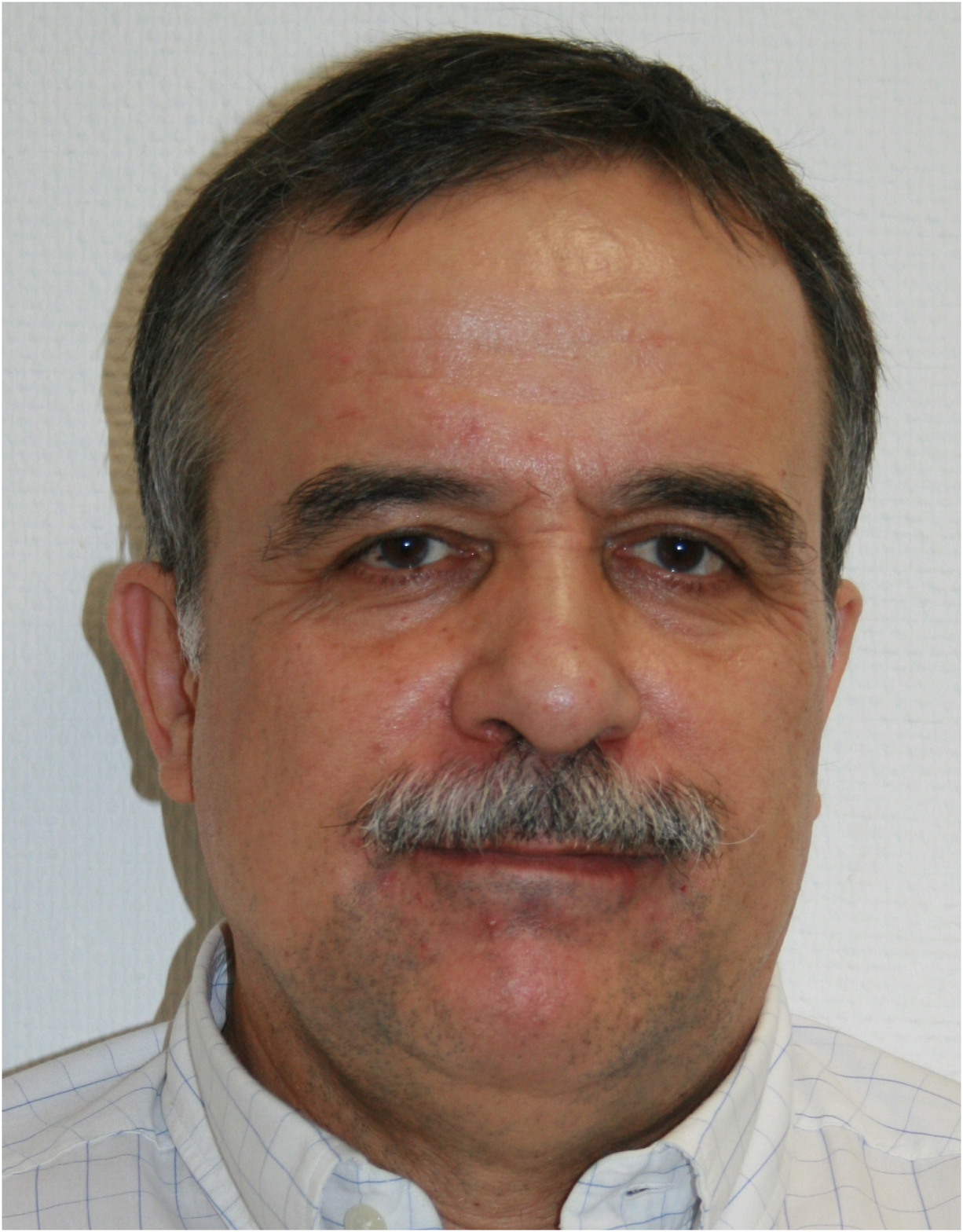}}]{Philippe Godlewski} received the Engineer's degree from Telecom Paris (formerly ENST) in 1973 and the PhD in 1976 from University Paris 6. He is Professor at Telecom ParisTech in the Computer and Network Science department. His fields of interest include Cellular Networks, Air Interface Protocols, Multiple Access Techniques, Error Correcting Codes, Communication and Information Theory.
\end{IEEEbiography}

\end{document}